\documentclass[journal,twoside,web]{ieeecolor}
\usepackage{generic}
\usepackage{cite}
\usepackage{amsmath,amssymb,amsfonts,cases,mathrsfs,mathtools}
\usepackage{algorithmic,algorithm}
\usepackage{xcolor,graphicx}
\usepackage{hyperref}
\hypersetup{hidelinks=true}
\usepackage{textcomp}
\def\BibTeX{{\rm B\kern-.05em{\sc i\kern-.025em b}\kern-.08em
    T\kern-.1667em\lower.7ex\hbox{E}\kern-.125emX}}
\usepackage{wrapfig,lipsum}
\usepackage{pgfplots}
\usepackage{tikz, tikzscale,calc}
\usepackage{caption}
\usepackage{subcaption}
\usepackage{enumerate}
\usepackage{booktabs}
\usepackage{array}
\usepackage{verbatim} 
\usepackage{multirow}
\newtheorem{lem}{Lemma}
\newtheorem{prop}{Proposition}
\newtheorem{defin}{Definition}
\newtheorem{ass}{Assumption}
\newtheorem{thm}{Theorem}
\newtheorem{remark}{Remark}
 \newtheorem{corol}{Corollary}
\newenvironment{lemma}{\begin{lem}}{\hfill  \end{lem}}

\begin{document}
\title{Edge-based Synchronization over Signed Digraphs with Multiple Leaders}
\author{Pelin Sekercioglu, Angela Fontan, Dimos V. Dimarogonas
\thanks{This work was supported in part by the Wallenberg AI, Autonomous Systems and Software Program (WASP) funded by the Knut and Alice Wallenberg (KAW) Foundation, the Horizon Europe EIC project SymAware (101070802), the ERC LEAFHOUND Project, the Swedish Research Council (VR), and Digital Futures.}
\thanks{P. Sekercioglu, A. Fontan, and D. V. Dimarogonas are with the Department of Decision and Control Systems, KTH Royal Institute of Technology, SE-100 44 Stockholm, Sweden (e-mail: \{pelinse,angfon,dimos\}@kth.se).}}

\maketitle

\begin{abstract}
This work addresses the edge-based synchronization problem in first-order multi-agent systems containing both cooperative and antagonistic interactions with one or multiple leader groups. The presence of multiple leaders and antagonistic interactions means that the multi-agent system typically does not achieve consensus, unless specific conditions (on the number of leaders and on the signed graph) are met, in which case the agents reach a trivial form of consensus. In general, we show that the multi-agent system exhibits a more general form of synchronization, including bipartite consensus and containment. Our approach proposes a signed edge-based agreement protocol for signed networks described by signed edge-Laplacian matrices. In particular, in this work, we present new spectral properties of signed edge-Laplacian matrices containing multiple zero eigenvalues and establish global exponential stability of the synchronization errors. Moreover, we explicitly compute the equilibrium to which all edge states converge, thereby characterizing the resulting synchronization behavior. Numerical simulations validate our theoretical results.
\end{abstract} 

\begin{IEEEkeywords}
Signed edge-based agreement protocol, synchronization, signed networks.
\end{IEEEkeywords}

\section{Introduction}
\label{sec:introduction}
\IEEEPARstart{T}{he} coordination of multi-agent systems has been widely investigated, with numerous results addressing problems such as consensus under first-order, second-order, and general linear high-order dynamics \cite{WEIBOOK}. When the interaction network includes a single leader, classical leader–follower consensus ensures that all follower agents asymptotically track the leader’s state \cite{ni2010leader}. However, this framework no longer applies when multiple leaders are present in the network. In such situations, the problem is better described as containment control \cite{Cao:11}, whose objective is to drive the followers’ states toward the containment set formed by the leaders’ states.

Numerous works have investigated distributed containment control in various settings \cite{kan2015containment,cao2012distributed,ji2008containment}. However, the majority of existing studies on consensus and containment in multi-agent systems focus on purely cooperative networks, that is, scenarios in which coordination among agents is achieved exclusively through cooperative interactions. However, in many real-world scenarios, agents may exhibit antagonistic behaviors. Examples include robotic applications \cite{sebastian2022adaptive,mathavaraj2023rigid}, and social networks where agents compete \cite{altafini2012_6329411,Fontan2021Role,Fontan2021Signed} and spread disinformation \cite{csekerciouglu2024distributed}, to name just a few. %\textcolor{blue}{This work is motivated by such real-world scenarios, particularly addressing the presence of multiple influential entities, referred to as leaders, in a signed edge-based framework.} 
A representative example is found in opinion dynamics, where consumer behavior is shaped by both peer influence and external entities, such as competing marketers (leaders). These leaders, grouped into disjoint cooperative or competitive subsets, affect consumers. Consumers may oppose a disliked marketer by aligning with the other, or adopt a position between them if they trust both, favoring the one with stronger influence. 

A common and widely adopted framework for representing both cooperative and antagonistic relationships in multi-agent systems relies on \textit{signed graphs} \cite{altafini2012_6329411,Shi2019Dynamics}. Within this representation, cooperative interactions are encoded by positive edge weights, whereas antagonistic relationships are captured through negative edge weights. In such networks, rather than achieving standard consensus, the typical emergent synchronization behavior is \textit{bipartite consensus}, where agents converge to two symmetric equilibrium points, provided the signed graph is \textit{structurally balanced}. A signed graph is said to be structurally balanced if the nodes can be divided into two disjoint subsets such that agents within the same subset cooperate, while agents in different subsets interact antagonistically \cite{altafini2012_6329411}. There are various studies on bipartite consensus control \cite{altafini2012_6329411, valcher} and on bipartite containment control \cite{meng2017bipartite,zhang2020cooperative,sekercioglu2023exponential,csekerciouglu2024distributed}.

In this paper, to address all possible emerging synchronization behaviors arising in the presence of antagonistic interactions and multiple leaders, we study the edge-based synchronization of multi-agent systems interconnected over directed signed graphs, where the edge-based formulation describes the dynamics in terms of relative differences between agents. We recast the synchronization problem into one of stability of the appropriately defined synchronization errors. The main contributions are threefold: (i) we present new properties of signed edge-Laplacian matrices (hereafter: edge Laplacians) containing multiple zero eigenvalues defining the multi-agent system (Lemma~\ref{lemma_multipleleaders} and Table~\ref{tab:eigenvalues}); (ii) we characterize the eigenvectors associated with the signed edge Laplacians within the signed edge-based network formulation; and (iii) we establish exponential stability of the synchronization set and provide explicit estimates of the edge-based limit points of the agents in the signed edge-based network formulation (Theorem~\ref{prop:result2}), using a Lyapunov equation-based analysis previously proposed in \cite{TACPelin}. In contrast to \cite{csekerciouglu2024distributed,sekercioglu2023exponential,meng2017bipartite}, which consider node-based formulations, we focus on edge-based synchronization and contribute with new spectral properties of signed (edge) Laplacians. Compared to \cite{du2018edge}, which studies edge convergence in strongly connected signed digraphs, and \cite{du2019further}, which considers signed digraphs containing a spanning tree, we address the more general case of signed digraphs with multiple leaders and establish exponential stability of the set, where all synchronization errors are equal to zero.

From a technical viewpoint, our main results on exponential stability build on \cite{csekerciouglu2024distributed} and the framework introduced in \cite{DYNCON-TAC16}, where the overall dynamics are decomposed into two coupled subsystems: the evolution of a weighted average and the synchronization errors relative to that average. We generalize the latter to the setting of directed signed graphs, adopting an edge-based representation. In particular, we recast the synchronization problem as a stability problem for synchronization errors, defined as the difference between the edge states and the weighted edge average. To capture all emergent behaviors, the analysis is carried out in signed edge-based coordinates, for which we investigate structural properties of the incidence and edge Laplacian matrices. As a key technical contribution, we provide new results that characterize the rank, null space, and eigenvector properties of signed edge Laplacians associated with directed signed graphs that contain one or multiple leaders. Then, we establish exponential stability of the synchronization set building on the Lyapunov characterization for Laplacians with multiple zero eigenvalues presented in our previous work \cite{TACPelin} and we provide the explicit asymptotic values of the edge states, revealing the emerging behavior of the multi-agent system. Together, these results yield a complete characterization of edge synchronization over directed signed graphs with leaders.  

\section{Preliminaries}\label{section3}
\textit{Notation:} $\lvert \cdot \rvert$ denotes the absolute value for scalars, the Euclidean norm for vectors, and the spectral norm for matrices. $\mbox{card}(\cdot)$ indicates the cardinality of a set. $\mbox{diag}(z)$ denotes a diagonal matrix whose diagonal elements are the entries of the vector $z$. $\mathcal{N}(A)$ denotes the null space of matrix $A$. $\mathbb{R}$ is the set of real numbers and $\mathbb{R}_{\geq 0}$ the nonnegative orthant. $A>0$ ($A\ge 0$) indicates that $A$ is a positive definite (positive semidefinite) matrix. An M-matrix is a matrix whose off-diagonal entries are nonpositive, and whose eigenvalues have nonnegative real parts. $\mbox{blkdiag}(A_i)$ indicates a block diagonal matrix formed by the matrices $A_i$.

%\subsection{Signed digraphs}\label{section3a} 
Let $\mathcal{G}_s = (\mathcal{V}, \mathcal{E})$ be a \textit{signed} graph, where $\mathcal{V} = \{ \nu_1, \nu_2, \dots, \nu_N \}$ is the set of $N$ nodes and $\mathcal{E} \subseteq \mathcal{V} \times \mathcal{V}$ is the set of edges. Each edge in $\mathcal{E}$ has a sign, either positive or negative. If all edges have a positive sign, the graph is called an \textit{unsigned} graph (in standard graph theory). The graph is undirected if information flow between agents is bidirectional, meaning $( \nu_i, \nu_j ) = ( \nu_j, \nu_i ) \in \mathcal{E}$. Otherwise, the graph is directed and is commonly referred to as a digraph. The edge $\varepsilon_k = ( \nu_j, \nu_i ) \in \mathcal{E}$ of a digraph denotes that the agent $\nu_i$, which is the terminal node (tail of the edge), can obtain information from the agent $\nu_j$, which is the initial node (head of the edge). The adjacency matrix of $\mathcal{G}_s$ is $A := [a_{ij}] \in \mathbb{R}^{N \times N}$, where $a_{ij}\ne 0$ if and only if $(\nu_j,\nu_i)\in \mathcal{E}$. $a_{ij} > 0$ if and only if the edge $(\nu_j, \nu_i)$ has a positive sign, indicating a cooperative relationship, and $a_{ij} < 0$ if and only if the edge $(\nu_j, \nu_i)$ has a negative sign, indicating an antagonistic relationship. In this work, we only consider unweighted digraphs, such that $a_{ij} = \{0, 1, -1\}$, without any self-loops. A signed digraph is said to be digon sign-symmetric if $a_{ij} a_{ji} \ge 0$. It means that the interaction between two interconnected agents has always the same sign in both directions. Throughout this work, we make the following standing assumption:

\begin{ass}\label{standing_ass}%[Standing assumption]
    The signed digraph is unweighted and digon sign-symmetric.
\end{ass}

A directed path is a sequence of distinct adjacent nodes in a digraph. When the nodes of the path are distinct except for its end vertices, the directed path is called a directed cycle. A directed spanning tree is a directed tree subgraph that includes all the nodes of the digraph. In this structure, every agent (node) has a parent node, except for the \textit{root node}, which has no incoming edges and is connected to every other node via directed paths. Since each edge points from the root to other nodes, the tree contains no cycles. A digraph is said to be strongly connected if there exists a directed path between every pair of nodes. 

In this work, we consider signed digraphs that may contain multiple leader nodes. A leader node is defined as either a root node or a node that is part of a \textit{rooted strongly-connected component (rooted SCC)}. A rooted SCC is a strongly connected subgraph without incoming edges. A \textit{leader group} is either a single root node (representing a single-node leader group) or an entire rooted SCC (representing multiple leader nodes interconnected in a strongly connected subgraph). If the graph contains at least one leader group, the remaining nodes are referred to as followers.--- See Figure \ref{leadergroups}.

A signed graph is said to be \textit{structurally balanced} (SB) if it can be split into two disjoint sets of vertices $\mathcal{V}_{1}$ and $\mathcal{V}_{2}$, where $\mathcal{V}_{1} \cup \mathcal{V}_{2} = \mathcal{V}, \mathcal{V}_{1} \cap \mathcal{V}_{2} = \emptyset$ such that for every $\nu_{i}, \nu_{j} \in \mathcal{V}_{p}, p \in \{ 1,\, 2 \}$, if $a_{ij} \geq 0$, while for every $\nu_{i} \in \mathcal{V}_{p}, \nu_{j} \in \mathcal{V}_{q}$, with $p,q \in \{ 1, \, 2 \}, p \neq q$, if $a_{ij} \leq 0$. It is {\it structurally unbalanced} (SUB), otherwise \cite{altafini2012_6329411}. The signed Laplacian matrix $L_{s}=[{\ell _{s_{ij}}}] \in \mathbb{R}^{N \times N}$ associated with $\mathcal{G}_s$ is defined as ${\ell _{s_{ij}}} := {\sum\limits_{k \leq {N}} {{ \lvert a_{ik} \rvert }} }$, if ${i = j}$; and ${\ell _{s_{ij}}} := { - {a_{ij}}}$, if ${i \ne j}$ \cite{altafini2012_6329411,Shi2019Dynamics}.
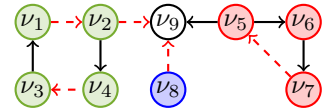
\begin{figure}[!t]\centering
\begin{tikzpicture}[node distance={9mm}, thick,main/.style = {draw, circle}] 
    \definecolor{pastelgreen}{rgb}{0.4660 0.6740 0.1880}
	\tikzset{mynode/.style={circle,draw,minimum size=13pt,inner sep=0pt,thick},}
    % \draw[pastelgreen,very thin, fill=pastelgreen!5](-0.5,0.5) rectangle(1.25,-1.5);
    % \node[align=center,pastelgreen] at (0.375,-1.9) {SB-rooted\\SCC};
    % \draw[red,very thin, fill=red!5](2.35,0.5) rectangle(4,-1.5);
    % \node[align=center,red] at (3.175,-1.9) {SUB-rooted\\SCC};
    % \draw[blue,very thin, fill=blue!5](1.4,-0.5) rectangle(2.2,-1.5);
    % \node[align=center,blue] at (1.8,-1.9) {root\\node};
	\node[mynode, fill=pastelgreen, fill opacity=0.2, draw=pastelgreen, text opacity=1] (1) {$\nu_1$}; 
	\node[mynode, fill=pastelgreen, fill opacity=0.2, draw=pastelgreen, text opacity=1] (2) [right  of=1] {$\nu_2$};
	\node[mynode, fill=pastelgreen, fill opacity=0.2, draw=pastelgreen, text opacity=1] (3) [below of=1] {$\nu_3$};
	\node[mynode, fill=pastelgreen, fill opacity=0.2, draw=pastelgreen, text opacity=1] (4) [below of=2] {$\nu_4$};
    \node[mynode] (9) [right of=2] {$\nu_9$};
    \node[mynode, fill=red, fill opacity=0.2, draw=red, text opacity=1] (5) [right of=9]{$\nu_5$}; 
	\node[mynode, fill=red, fill opacity=0.2, draw=red, text opacity=1] (6) [right of=5] {$\nu_6$};
	\node[mynode, fill=red, fill opacity=0.2, draw=red, text opacity=1] (7) [below of=6] {$\nu_7$};
    \node[mynode, fill=blue, fill opacity=0.2, draw=blue, text opacity=1] (8) [below of=9] {$\nu_8$};
    \draw[red, dash pattern=on 1mm off 1mm][->](1) -- node[midway, above] {}(2);
	\draw[<-] (1) -- node[midway, left] {}(3);
	\draw[->] (2) -- node[midway, right] {}(4);
	\draw[<-, color=red, dash pattern=on 1mm off 1mm] (3) -- node[midway, below] {}(4);
    \draw[->](5) -- node[midway, above] {}(6);
    \draw[->](6) -- node[midway, above] {}(7);
    \draw[red, dash pattern=on 1mm off 1mm][->](7) -- node[midway, above] {}(5);
    \draw[red, dash pattern=on 1mm off 1mm][->](2) -- node[midway, above] {}(9);
    \draw[->](5) -- node[midway, above] {}(9);
    \draw[red, dash pattern=on 1mm off 1mm][->](8) -- node[midway, above] {}(9);
	\end{tikzpicture}
\caption{A signed digon sign-symmetric digraph containing $3$ leader groups, where the black edges represent cooperative interactions and the dashed red edges represent antagonistic interactions. The first leader group is a SB-rooted SCC containing the leader nodes $\nu_1, \nu_2, \nu_3,$ and $\nu_4$, the second leader group is a SUB-rooted SCC containing the leader nodes $\nu_5, \nu_6,$ and $\nu_7$, and the third leader group is a root (leader) node, $\nu_8$. The node $\nu_9$ is a follower node.} \label{leadergroups}
\end{figure}

We now recall definitions of the signed incidence matrices of a signed digraph. Consider a signed graph $\mathcal{G}_s$ containing $N$ nodes and $M$ edges. The signed
incidence matrix $E_s \in \mathbb{R}^{N \times M}$ of $\mathcal{G}_s$ is defined as
    \begin{equation}\label{434}
	{[E_{s}]_{ik}} :=  \left\{ \begin{matrix*}[l]
		{+1,}&{\text{if}\ \varepsilon_k = (\nu_i, \nu_j);}\\
		{-1,}&{\text{if } \nu_i,\nu_j\ \text{are cooperative and } \varepsilon_k = (\nu_j, \nu_i);}\\ %terminal node of the edge}\ e_k;}\\
{+1,}&{\text{if } \nu_i,\nu_j\ \text{are competitive and }\varepsilon_k = (\nu_j, \nu_i);}\\
%{}&{\nu_i \text{ is the terminal node of}\ \varepsilon_k ;}\\
{0,}&{\text{otherwise}},
\end{matrix*}\right.
\end{equation}
and the signed in-incidence matrix $E_{{s \odot}} \in \mathbb{R}^{N \times M}$ of $\mathcal{G}_s$ is defined as
\begin{equation} \label{490}
{[E_{s \odot}]_{ik}} := \left\{ \begin{matrix*}[l]
{-1,}&{\text{if } \nu_i,\nu_j\ \text{are cooperative and } \varepsilon_k = (\nu_j, \nu_i);}\\
%{}&{\nu_i \text{ is the terminal node of}\ \varepsilon_k ;}\\
{+1,}&{\text{if } \nu_i,\nu_j\ \text{are competitive and } \varepsilon_k = (\nu_j, \nu_i);}\\
%{}&{\nu_i \text{ is the terminal node of}\ \varepsilon_k ;}\\
{0,}&{\text{otherwise}},
\end{matrix*}\right. 
\end{equation}
where $\varepsilon_k$ is the oriented edge interconnecting nodes $\nu_i$ and $\nu_j$,\ $k \leq M,\ i,j \leq N$. The signed Laplacian $L_{s} \in \mathbb{R}^{N \times N}$ and the signed edge Laplacian $L_{e_{s}} \in \mathbb{R}^{M \times M}$ of a signed digraph, they are given as
\begin{align}\label{498}
L_{s} = E_{s \odot}E_{s}^\top,\quad L_{{e_s}} = E_{s}^\top E_{{s \odot}}.
\end{align}
The signed Laplacian of a signed digraph is not symmetric, and its eigenvalues all have nonnegative real parts.

% The signed Laplacian matrix $L_{s}=[{\ell _{s_{ij}}}] \in \mathbb{R}^{N \times N}$ associated with $\mathcal{G}_s$ is defined as \cite{altafini2012_6329411,Shi2019Dynamics} 
% \begin{equation}\label{Ls}
% 	{\ell _{s_{ij}}} = \left\{ {\begin{array}{*{20}{c}}
% 			{\sum\limits_{k \leq {N}} {{ \lvert a_{ik} \rvert }} }&{i = j}\\
% 			{ - {a_{ij}}}&{i \ne j}.
% 	\end{array}} \right.
% \end{equation}

\section{Model and Problem Formulation}\label{section2}
Consider a group of $N$ first-order agents interconnected over a signed digraph $\mathcal{G}_{s}$ with $M$ cooperative and antagonistic edges. The dynamics of each agent are given by
\begin{align}\label{FO}
	\dot x_i = u_i,\quad i \in \{1, 2, \dots, N \},
\end{align}
where $x_i \in \mathbb{R}$ is the state of the $i$th agent, and $u_i \in \mathbb{R}$ is the control input. %For notational simplicity and without loss of generality, we assume that $x_i \in \mathbb{R}$, but all contents of this paper apply to systems of higher dimension $x_i \in \mathbb{R}^n, n \geq 1$, using a Kronecker product. The agents interact on a signed digraph $\mathcal{G}_{s}$ that contains $N$ nodes and $M$ edges. 
The system \eqref{FO} is interconnected with the control law
\begin{align}\label{CL}
	u_i = -k_{1} \sum_{j=1}^{N} \lvert a_{{ij}} \rvert \left[ x_i - \mbox{sign}(a_{{ij}} )x_j \right], 
\end{align}
%or in vector form
%\begin{align}\label{CL_vect}
%	u_{\phi} = -k_1 L_{s_{\phi}}^\top x_{\phi},
%\end{align}
where $k_{1}>0$, and $A = [a_{{ij}}]$ is the adjacency matrix, with $a_{{ij}} \in \{ 0, \pm 1 \}$ representing the adjacency weight between nodes $\nu_j$ and $\nu_i$. It is well known that under the distributed control law \eqref{CL}, agents interconnected over a SB signed digraph achieve bipartite consensus if and only if the signed graph contains a directed spanning tree, while those over a SUB digraph reach trivial consensus provided that the graph contains a directed spanning tree and has no root node \cite{altafini2012_6329411,hu2014emergent}. Thus, we pose the following on the connectivity of the signed graph.
\begin{ass}\label{ass1}
    The signed digraph contains a directed spanning tree.
\end{ass}

In this article, we analyze the behavior of the multi-agent system \eqref{FO} in closed loop with the control law \eqref{CL} and under the assumption that the edges interconnecting the agents are cooperative and antagonistic. The possible achievable control objectives for the system \eqref{FO}-\eqref{CL} interconnected on a signed digraph depend on the \textit{structural balance} property and the presence of one or multiple leaders. In particular, the presence of rooted SCCs and root nodes plays a fundamental role in shaping the convergence behavior of the agents. A rooted SCC affects the asymptotic behavior of the network, depending on whether it is SB or SUB. When the digraph contains a single root node, the system achieves a leader-following bipartite consensus, where all agents converge either to the leader's state or its opposite state in the SB case. In such cases, the root node or the rooted SCC acts as a \textit{leader group}, while the remaining agents behave as \textit{followers}. Formally, a leader group is defined as either a single root node or a rooted SCC composed of multiple nodes. By contrast, when the digraph contains multiple root nodes or rooted SCCs, each leader group independently influences the rest of the network. Rather than converging to a single consensus value or bipartite consensus configuration, the agents achieve bipartite containment, that is, their states converge to a region determined by the states of the leader groups. In this particular case, the digraph does not admit a directed spanning tree, since the distinct leader groups are not mutually reachable. Therefore, for signed digraphs with multiple leaders groups, we pose the following connectivity assumption.
\begin{ass}\label{ass2} The signed digraph contains $m$ leader groups, where $l_1$ is the number of root nodes, $l_{2_{SB}}$ is the number of SB-rooted SCCs, and $l_{2_{SUB}}$ is the number of SUB-rooted SCCs with $m=l_1 + l_2 > 1$ and $l_2 = l_{2_{SB}} +l_{2_{SUB}}$; $\mathcal L$ and $\mathcal F$ are the sets containing the index corresponding to the nodes in the leader groups and followers respectively. 
\begin{enumerate}
	\item The network contains $k$ leader nodes, which can be organized into $m$ leader groups of $p_i$ nodes included in a strongly connected subgraph (or $p_i=1$ if it is a single root node), where $1 < m \leq k < N,\ i \leq m$, and $\sum_{i=1}^m p_i = k$.
	\item There exists at least a path from $\nu_i$ to $\nu_j$, $\forall j \in \mathcal{F}$ and $i \in \mathcal{L}$.
    %\item Given each follower $\nu_j$, with $j \in \mathcal{F}$, there exists at least one leader $\nu_i$, with $i \in \mathcal{L}$, such that there exists at least one path from $\nu_i$ to $\nu_j$.
\end{enumerate}
\end{ass}
\begin{remark}
    Assumption \ref{ass2} is derived from the definition of leader proposed in \cite{meng2017bipartite}. If we consider the signed digraph in Figure \ref{leadergroups}, the leader set is given as $\mathcal L = \{ \nu_1, \nu_2, \nu_3, \nu_4, \nu_5, \nu_6, \nu_7, \nu_8 \}$, the follower set is given as $\mathcal F = \{ \nu_9 \}$, and $l_1 = l_{2_{SB}} = l_{2_{SUB}} = 1$. Regarding Item 1 of Assumption \ref{ass2}, we have $k=8$ leader nodes organized into $m=3$ leader groups. The first leader group contains $p_1 = 4$ nodes, the second leader group contains $p_2 = 3$ nodes, and the third leader group contains $p_3 = 1$ node, where $\sum_{i=1}^m p_i = p_1 + p_2 + p_3 = 8.$
\end{remark}

The achievable control objective for \eqref{FO} in closed loop with the distributed control law \eqref{CL} and interconnected over a SB digraph:
\begin{itemize}
    \item containing a directed spanning tree (Assumption \ref{ass1} holds) is to ensure agents achieve \textit{bipartite consensus},  where agents converge to the same value in modulus but not in sign, that is,
\begin{align}\label{obj_BC}
	\lim_{t \to \infty} \left[ x_{i}(t) - \mbox{sign}(a_{{ij}})x_{j}(t) \right] = 0, \ \forall i,j \leq N.
\end{align}
    \item containing $m$ leader groups, under Assumption \ref{ass2}, is to ensure agents achieve \textit{bipartite containment}, that is,
		\begin{equation}\label{eq:containment_SB} 
		  \lim_{t \to \infty} [ x_j(t) - \max_{i \in \mathcal{L}} (s_{i} x_{i}(t)) ] [ x_j(t) - \min_{i \in \mathcal{L}} (s_{i} x_{i}(t))] \leq 0, 
	\end{equation}
for each $j \in \mathcal{F}$, where $s_{i} =1$ if agent $\nu_j$ is cooperative with leader $\nu_i$, and $s_{i} = -1$ otherwise.
\end{itemize} 

The achievable control objective for \eqref{FO} in closed loop with the distributed control law \eqref{CL} interconnected over a SUB digraph:
\begin{itemize}
    \item containing a directed spanning tree (Assumption \ref{ass1}) and a SUB-rooted SCC, without a root (leader) node, is to ensure agents achieve \textit{trivial consensus}, where all agents converge to zero, that is,
    \begin{align}\label{obj_C}
	   \lim_{t \to \infty} x_{i}(t) = 0, \quad \forall i \leq N_{\phi}.
    \end{align} 
    \item containing a directed spanning tree (Assumption \ref{ass1}) and SB-rooted SCC or a root node, is to ensure agents achieve \textit{interval bipartite consensus}, that is,
    \begin{align}\label{obj_IntBC}
	   \lim_{t \to \infty} x_{i}(t) \in [-\theta, \theta], \quad \forall i \leq N,
    \end{align} 
    where $\theta > 0$.
    \item containing $m$ leader groups, under Assumption \ref{ass2}, is to ensure agents achieve \textit{bipartite containment}, that is, 
    \begin{equation}  \label{eq:containment_SUB} 
	   \lim_{t \to \infty} \big[\, \lvert x_{j}(t) \rvert - \max_{i \in \mathcal{L}}\lvert x_{i}(t) \rvert \,\big] \leq 0,\quad j \in \mathcal{F}.
\end{equation}
\end{itemize}
\begin{remark}
If the considered digraph is SUB and contains multiple leader groups, where all rooted SCCs are SUB and the digraph contains a root node, the bipartite containment objective \eqref{eq:containment_SUB} reduces to interval bipartite consensus \eqref{obj_IntBC}. If the digraph contains only SUB-rooted SCCs, the bipartite containment objective \eqref{eq:containment_SUB} reduces to trivial consensus \eqref{obj_C}.
\end{remark}

% \begin{figure}[!t]\centering
% \begin{tikzpicture}[node distance={9mm}, thick,main/.style = {draw, circle}] 
% \definecolor{pastelgreen}{rgb}{0.4660 0.6740 0.1880}
% \tikzset{mynode/.style={circle,draw,minimum size=15pt,inner sep=0pt,thick},}

% \node[mynode] (1) {$\nu_1$}; 
% \node[mynode] (2) [right  of=1] {$\nu_2$};
% \node[mynode] (3) [below of=1] {$\nu_3$};
% \node[mynode] (4) [below of=2] {$\nu_4$};

% \node[mynode] (5) [right of=9]{$\nu_5$}; 
% \node[mynode] (6) [right of=5] {$\nu_6$};
% \node[mynode] (7) [below of=6] {$\nu_7$};

% \node[mynode,blue] (8) [below of=4] {$\nu_8$};
% \node[mynode,blue] (9) [below of=7] {$\nu_9$};
% \node[mynode,blue] (10) [left of=9] {$\nu_{10}$};

% \foreach \from/\to in {1/2,2/4,4/3,3/1, %
% 5/6,6/7,7/5, %
% 5/10,7/10, 5/9, 4/8,3/8}{
% \draw[<-] (\from) -- (\to);}

% \foreach \from/\to in {2/5,5/4,5/8,4/10}{
% \draw[->,red, dash pattern=on 1mm off 1mm] (\from) -- (\to);}

% \draw[pastelgreen,fill=pastelgreen!5] (0,-3.05) rectangle(2.8,-2.95);
% \draw[pastelgreen,fill=pastelgreen!5] (6.2,-3.05) rectangle(7.1,-2.95);

% \draw[->,thick] (0,-3) -- (5,-3);

% \node[align=center,pastelgreen] at (0.375,-1.9) {converge zones for follower $\nu_1$};

% \end{tikzpicture}
% \caption{
% For each follower $i\in \mathcal{F}$, the achievable convergence zone is determined by a ``containment set'', which is determined by convex hull of the leaders' states, taken with the sign that captures the type of interaction between the follower and leader, positive if cooperative, negative if antagonistic. } 
% \end{figure}

\begin{remark}
The bipartite containment objective of \eqref{eq:containment_SB}, \eqref{eq:containment_SUB} can be interpreted as follows. 
When the digraph is SB \eqref{eq:containment_SB}, for each follower $i\in \mathcal{F}$, the achievable convergence zone is determined by a containment set defined by the absolute value of the leaders' states, taken with the sign that captures the type of interaction between the follower and leader, positive if cooperative, negative if antagonistic. In particular, this leads to two symmetric convergence zones, determined by the gauge permutation capturing the SB property of the signed digraph (Lemma~\ref{lemma_gauge} in Section \ref{section3a1}). Instead, when the digraph is SUB \eqref{eq:containment_SUB}, the containment set is defined only by the absolute value of the leaders' states, leading to a unique achievable convergence zone.
For more details, simulation examples, and illustration of convergence zones, see \cite[Section V]{sekercioglu2023exponential}.
% For each mode, assuming that a signed digraph is SB means that each follower $j \in \mathcal{F}$ will converge to one of two symmetric zones defined by the leaders' states. In particular, the boundaries of these zones are given in \eqref{eq:containment_SB} by the max/min values across the states of the leaders, after 
% The bipartite containment objective in \eqref{eq:containment_SUB} is a more general version of the  
% the bipartite containment objective in \eqref{eq:containment_SB} is a refined version of \eqref for the more restrictive case of SB digraphs with multi-leader groups. 
% For an SB signed digraph, the presence of two antagonistic disjoint subsets leads to two symmetric convergence zones, as described by \eqref{eq:containment_SB}, defined by the cooperative and symmetric antagonistic leader states. In contrast, for a SUB signed digraph, where no opposing subsets exist, agents converge to a single zone, as described by \eqref{eq:containment_SUB}, determined by all leaders' final states and symmetric final states. For more details and simulation examples, see Section V in \cite{sekercioglu2023exponential}.
\end{remark}

% \begin{remark}
%     The bipartite containment objective in \eqref{eq:containment_SB} is a refined version of the bipartite containment objective in \eqref{eq:containment_SUB} for the more restrictive case of SB digraphs with multi-leader groups. For a SB signed digraph, the presence of two antagonistic disjoint subsets leads to two symmetric convergence zones, as described by \eqref{eq:containment_SB}, defined by the cooperative and symmetric antagonistic leader states. In contrast, for a SUB signed digraph, where no opposing subsets exist, agents converge to a single zone, as described by \eqref{eq:containment_SUB}, determined by all leaders' final states and symmetric final states. For more details and simulation examples, see Section V in \cite{sekercioglu2023exponential}.
% \end{remark}

In this article, we show that, under Assumptions~\ref{standing_ass}--\ref{ass2}, depending on whether the associated signed digraph is SB or SUB and contains multiple leader groups or not, synchronization in the signed OMAS leads to bipartite consensus \eqref{obj_BC}, trivial consensus \eqref{obj_C}, interval bipartite consensus \eqref{obj_IntBC} or bipartite containment \eqref{eq:containment_SB}, \eqref{eq:containment_SUB}. In the case agents achieve trivial consensus or bipartite consensus, their edges states, defined as
\begin{equation}\label{def_e}
    e_k = x_{i} - \mbox{sign}(a_{{ij}})x_{j},\quad \varepsilon_k =(\nu_j,\nu_i) \in \mathcal{E},
\end{equation}
where $k \leq M$ denotes the index of the interconnection between the $j$th and $i$th agents, converge to zero. However, as the networks contain a priori rooted SCCs or multiple root nodes, the resulting Laplacians can also have multiple zero eigenvalues in some cases. This also results, in general, in multiple convergence points for agents and their edge states, which means that edge states do not always converge to zero. Then, following the framework laid in \cite{DYNCON-TAC16} and extending it to signed networks with associated Laplacians containing multiple zero eigenvalues and to the signed edge-based formulation, we define the weighted average system for the edge states.

Let $\xi$ be the number of zero eigenvalues of $L_{e_s}$. Then, we define the weighted edge average state as
\begin{equation}\label{e_m}
	e_{m} := \sum_{i=1}^{\xi} v_{r_{i}}v_{l_{i}}^{\top} e,
\end{equation}
where $v_{r_{i}}$ and $v_{l_{i}}$ are the right and left eigenvectors associated with the zero eigenvalues of the edge Laplacian, $e := [e_1 \ e_2 \ \cdots \ e_{M}]^\top$, and $e_k$ is defined in \eqref{def_e}. We define the synchronization errors as
\begin{align}\label{def_e_bar}
	\bar e &= e - e_{m} = [I - \sum_{i=1}^{\xi} v_{r_{i}}v_{l_{i}}^{\top}]e,
\end{align}
where $\bar{e} := [\bar e_1 \ \bar e_2 \ \dots \ \bar e_{M}]^\top$. Then, the control objective is equivalent to making the synchronization errors converge to zero, that is, 
\begin{align}\label{obj}
	\lim_{t \to \infty} \bar{e}_{k}(t) = 0, \quad \forall k \leq M.
\end{align}
% \begin{remark}
% In the case where the nullspaces of $L_{e_s}^\top$ and $E_s$ are equal to each other, $e_{m} = 0$, and the synchronization errors are equal to the edge states.
% \end{remark}

Before introducing our main results in Section~\ref{section5}, we remind in Section~\ref{section4} some properties of the signed edge Laplacians from \cite[Section IV]{TACPelin} that are useful for establishing our main results.%, using notions on signed graphs and the signed edge-based formulation from \cite{altafini2012_6329411} and \cite{du2019further}, respectively. %We extend the method in \cite{ECC20_EP} and \cite{csekerciouglu2024distributed}, to construct strict Lyapunov functions for linear systems with multiple null eigenvalues and edge Laplacians.

\section{Signed edge-based formulation}\label{section3a1}
\subsection{Spectral properties of signed edge Laplacians}
For a general digraph, the following properties hold regarding similarity transformations, the positive (semi-)definiteness of its Laplacians, and the dimension of their null space \cite{altafini2012_6329411, du2018edge, du2019further}. In particular, Lemmata \ref{lemmaE_in}, \ref{lemma3a}--\ref{lemma_multipleleaders} present original contributions. We first start by the rank properties of the in-incidence matrix and then consider SB and SUB digraphs containing a directed spanning tree, respectively. Similar properties for undirected signed graphs can be found in \cite{CDC2025OMAS}.

\begin{lemma}\label{lemmaE_in}
For a signed digraph containing a directed spanning tree, $\mbox{rank}(E_{s \odot}) = N-1$ if the digraph contains a root node, and $\mbox{rank}(E_{s \odot}) = N$, otherwise.
\end{lemma}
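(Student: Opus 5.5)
The plan is to compute the rank of $E_{s\odot}$ by looking at its row structure and row supports. By \eqref{490}, column $k$ of $E_{s\odot}$, corresponding to the edge $\varepsilon_k=(\nu_j,\nu_i)$, has exactly one nonzero entry, namely $\pm1$ in row $i$, the tail of the edge. Hence row $i$ of $E_{s\odot}$ is nonzero if and only if $\nu_i$ has at least one incoming edge. Moreover, two distinct rows have disjoint supports, since each column has a single nonzero entry. The nonzero rows are therefore mutually orthogonal, and so linearly independent. It follows that
\begin{equation*}
\mbox{rank}(E_{s\odot}) = \mbox{card}\{\nu_i\in\mathcal V:\ \nu_i \text{ has in-degree at least one}\}.
\end{equation*}
This identity is purely combinatorial and does not use the signs. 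The signs only flip individual entries, which does not affect linear independence of rows with disjoint supports.

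It remains to count the nodes with zero in-degree under Assumption~\ref{ass1}. First, suppose the digraph has a directed spanning tree and two nodes with no incoming edges. Then one of them is not reachable from the other, which contradicts the existence of a single root reaching every node. So at most one node has zero in-degree, and by the paper's definition that node is a root node.

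\textbf{Case with a root node.} Exactly one row of $E_{s\odot}$ vanishes, all other nodes have in-degree at least one, and so $\mbox{rank}(E_{s\odot})=N-1$.

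\textbf{Case without a root node.} Here "root node" is meant in the paper's sense of a node with no incoming edges. The root of the spanning tree then lies in a rooted SCC of size greater than one, so it does have incoming edges. Hence every node has in-degree at least one, all $N$ rows are nonzero, and $\mbox{rank}(E_{s\odot})=N$. This requires $M\ge N$, which holds automatically in this case.

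The only delicate point is this terminology: distinguishing the root of the spanning tree from a "root node" in the sense of having no incoming edges. I would state it explicitly so the two cases line up with the statement. Digon sign-symmetry (Assumption~\ref{standing_ass}) is not needed, since it only ensures that the cooperative/competitive label is well defined.
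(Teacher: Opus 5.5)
Your proof is correct and follows the same route as the paper: the rank of $E_{s\odot}$ equals the number of nodes with at least one incoming edge, and under a directed spanning tree at most one node, the root node, can lack incoming edges. Your observation that each column of $E_{s\odot}$ has a single nonzero entry, so the nonzero rows have disjoint supports and are linearly independent, makes explicit the independence step that the paper's proof leaves implicit, as does your argument that at most one node can have zero in-degree.
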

\begin{proof} A directed spanning tree of $N$ agents contains exactly $N - 1$ independent edges. If the digraph contains a root node, this root node has no incoming edges, which implies that the corresponding row in the in-incidence matrix is filled by zeros. In contrast, all other nodes have at least one incoming edge, resulting in at least one nonzero entry (either $1$ or $-1$) in their respective rows of the in-incidence matrix. Consequently, $\mbox{rank}(E_{s \odot}) = N-1$. Conversely, in the case the digraph does not contain a root node, we have $\mbox{rank}(E_{s \odot}) = N$. \end{proof}

We can transform the Laplacian and incidence matrices of SB signed graphs into matrices corresponding to unsigned graphs using some similarity (gauge) transformations. In the case the graph is SUB, these transformations cannot be used.
\begin{lemma}(Gauge transformation \cite{altafini2012_6329411})\label{lemma_gauge}
For a SB signed graph, there exists a diagonal matrix $D \in \mathfrak{D}$, where $\mathfrak{D} = \left\{ D = \mbox{diag}(d) \mid d = [d_1\ d_2\ \cdots \ d_N],\ d_i \in \{ -1, 1\},\ i \leq N \right\},$ such that all off-diagonal elements of \( D L_s D \) are non-positive, \textit{i.e.,} \( D L_s D \) is an M-matrix.%has the structure of a Laplacian matrix corresponding to an unsigned graph.
\end{lemma}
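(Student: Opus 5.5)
The plan is to build $D$ directly from the structural-balance partition and then read off the sign pattern of $D L_s D$ entrywise. By definition of SB, there are disjoint sets $\mathcal{V}_1,\mathcal{V}_2$ with $\mathcal{V}_1\cup\mathcal{V}_2=\mathcal{V}$, such that $a_{ij}\ge 0$ whenever $\nu_i,\nu_j$ lie in the same set and $a_{ij}\le 0$ whenever they lie in different sets. I define $d_i := 1$ if $\nu_i\in\mathcal{V}_1$ and $d_i := -1$ if $\nu_i\in\mathcal{V}_2$, and set $D=\mbox{diag}(d)\in\mathfrak{D}$. The partition is the only input used; digon sign-symmetry (Assumption~\ref{standing_ass}) is not needed here, although it is consistent with the construction.

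Since $D$ is diagonal with $d_i^2=1$, we have $[D L_s D]_{ij} = d_i d_j \ell_{s_{ij}}$. On the diagonal this gives $[DL_sD]_{ii}=\ell_{s_{ii}}=\sum_k \lvert a_{ik}\rvert \ge 0$. Off the diagonal, $[DL_sD]_{ij} = -d_i d_j a_{ij}$. The key observation is that $d_id_ja_{ij}\ge 0$ for every pair. If $\nu_i,\nu_j$ are in the same set, then $d_id_j=1$ and $a_{ij}\ge0$. If they are in different sets, then $d_id_j=-1$ and $a_{ij}\le 0$. In both cases $d_id_ja_{ij}=\lvert a_{ij}\rvert$, so every off-diagonal entry equals $-\lvert a_{ij}\rvert\le 0$.

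To conclude that $DL_sD$ is an M-matrix in the paper's sense, I also need its eigenvalues to have nonnegative real parts. The matrix $DL_sD$ has diagonal $\sum_k\lvert a_{ik}\rvert$ and off-diagonal $-\lvert a_{ij}\rvert$, so it is exactly the standard Laplacian of the unsigned digraph with adjacency $[\lvert a_{ij}\rvert]$. Every row sum is zero, so each Gershgorin disc is centered at $\ell_{s_{ii}}\ge 0$ with radius $\ell_{s_{ii}}$. These discs lie in the closed right half-plane, which gives nonnegative real parts. The same conclusion follows from $DL_sD$ being similar to $L_s$, since $D^{-1}=D$, together with the fact recalled earlier that the eigenvalues of $L_s$ have nonnegative real parts.

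The only delicate point is the sign bookkeeping in the off-diagonal case analysis, i.e., checking that the SB sign convention on $a_{ij}$ matches the choice of $d_i$. Once $d_id_ja_{ij}=\lvert a_{ij}\rvert$ is established, the rest is immediate.
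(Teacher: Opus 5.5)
Your proof is correct and takes the same approach as the source the paper relies on: the paper gives no proof of its own and defers to Altafini's gauge argument, which is exactly your choice $d_i=\pm 1$ from the structural-balance bipartition, giving $d_id_ja_{ij}=\lvert a_{ij}\rvert$ so that $DL_sD$ is the Laplacian of the unsigned digraph with adjacency $[\lvert a_{ij}\rvert]$. Your extra eigenvalue check, via Gershgorin or via the similarity $D^{-1}=D$, correctly covers the second half of the paper's definition of an M-matrix.
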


\begin{lemma}(Edge-gauge transformation \cite[Lemma 4]{du2018edge})\label{lemma_edgegauge}
For a SB signed graph, there exist diagonal matrices \( D = \operatorname{diag}(d) \), from Lemma \ref{lemma_gauge}, and \( D_e = \operatorname{diag}(d_e) \), where $d_e = [d_{e_1}\ \cdots\ d_{e_M}], \quad i \leq M,$ with $d_{e_i} = 1$ if $\nu_i \in \mathcal{V}_1,$ and $d_{e_i} = -1$ if $\nu_i \in \mathcal{V}_2,$ where $\nu_i$ is the initial node of the edge, such that $E = D E_s D_e$ has the structure of an incidence matrix corresponding to an unsigned graph.
\end{lemma}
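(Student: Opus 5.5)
The plan is a direct column-by-column check of $D E_s D_e$, using only the definition \eqref{434} of $E_s$, the bipartition $\mathcal{V}_1,\mathcal{V}_2$ behind the SB property, and the gauge matrix $D$ of Lemma~\ref{lemma_gauge}. Concretely, I take $d_i=1$ if $\nu_i\in\mathcal{V}_1$ and $d_i=-1$ if $\nu_i\in\mathcal{V}_2$. I read the $k$th entry of $d_e$ as attached to the edge $\varepsilon_k$: it equals $d_j$, where $\nu_j$ is the initial node of $\varepsilon_k$. Since left and right multiplication by diagonal matrices only rescales rows and columns, $[D E_s D_e]_{ik} = d_i\,[E_s]_{ik}\,d_{e_k}$. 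In particular, the zero pattern of $E_s$ is preserved. The goal is therefore to show that each column of $D E_s D_e$ has exactly one $+1$, at the initial node, and one $-1$, at the terminal node. That is the structure of an (out/in) incidence matrix of the underlying unsigned digraph.

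Fix an edge $\varepsilon_k=(\nu_j,\nu_i)$. By \eqref{434}, column $k$ of $E_s$ has only two nonzero entries:
\begin{itemize}
\item the entry at the initial node, $[E_s]_{jk}=+1$;
\item the entry at the terminal node, $[E_s]_{ik}=-\mbox{sign}(a_{ij})$.
\end{itemize}
After the transformation, the entry at the initial node becomes $d_j\cdot 1\cdot d_{e_k}=d_j^2=1$. The entry at the terminal node becomes $-d_i\,\mbox{sign}(a_{ij})\,d_j$. I then split into the two sign cases, using structural balance:
\begin{itemize}
\item If $\varepsilon_k$ is cooperative ($a_{ij}>0$), then $\nu_i$ and $\nu_j$ lie in the same set $\mathcal{V}_p$, so $d_id_j=1$ and the entry is $-1$.
\item If $\varepsilon_k$ is antagonistic ($a_{ij}<0$), then $\nu_i$ and $\nu_j$ lie in different sets, so $d_id_j=-1$ and the entry is $-(-1)(-1)=-1$.
\end{itemize}
In both cases column $k$ becomes $e_j-e_i$, exactly the column of the unsigned incidence matrix $E$ for the edge $(\nu_j,\nu_i)$. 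The same computation restricted to the terminal rows shows that $D E_{s\odot} D_e$ is the unsigned in-incidence matrix. This is a useful byproduct, since it gives $D_e L_{e_s} D_e = E^\top E_\odot$, which is presumably what the paper uses later.

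Digon sign-symmetry (Assumption~\ref{standing_ass}) ensures that the sign of each edge is well defined regardless of orientation, so the case split is unambiguous.

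The only real obstacle is notational. The statement indexes $d_e$ as if by node ("$\nu_i$ is the initial node"), but it must be indexed by edge. I would state explicitly that $d_{e_k}:=d_{j(k)}$, where $j(k)$ is the index of the initial node of $\varepsilon_k$. The entry-wise verification above then completes the proof.
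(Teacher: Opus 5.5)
Your proposal is correct. The paper gives no argument of its own for this lemma; it simply imports it from \cite[Lemma 4]{du2018edge}. Your column-by-column computation is therefore a self-contained verification of something the paper only cites.

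The key step is sound. Structural balance means precisely $\mbox{sign}(a_{ij})=d_id_j$ on every edge. Hence, for $\varepsilon_k=(\nu_j,\nu_i)$, the initial-node entry becomes $d_j^2=1$ and the terminal-node entry becomes $-d_id_j\,\mbox{sign}(a_{ij})=-1$.

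Three small remarks.

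\textbf{The coupling of $D_e$ to $D$.} Your explicit choice $d_{e_k}:=d_{j(k)}$ is the right reading of the statement. The lemma specifies $D$ through Lemma~\ref{lemma_gauge} but $D_e$ through the labels $\mathcal{V}_1,\mathcal{V}_2$. If these were chosen inconsistently (e.g.\ $D$ built from the swapped partition), you would get $-E$ rather than $E$ on the affected components. Tying $D_e$ to the same vector $d$ removes this ambiguity. It also works for any admissible gauge, because any $D$ with $DL_sD$ an M-matrix satisfies $-d_ia_{ij}d_j\le 0$, i.e.\ $d_id_j=\mbox{sign}(a_{ij})$ on every edge.

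\textbf{Assumption~\ref{standing_ass} is not needed.} Structural balance already forces both orientations of a digon to carry the sign $d_id_j$. Moreover, your case split only uses the sign of $\varepsilon_k$ itself.

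\textbf{What the paper uses downstream.} The paper later relies on the inverted form $E_s^\top=D_eE^\top D$, obtained from $D^2=D_e^2=I$, not on $D_eL_{e_s}D_e=E^\top E_{\odot}$. It uses this form for the rank count in Lemma~\ref{lemma3a}(ii) and in the proof of Theorem~\ref{prop:result2}(i). Both identities follow immediately from your computation.
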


%\subsubsection{Rank properties of the  signed Laplacian and signed edge Laplacian}
We now analyze the rank properties of the signed Laplacian and the signed edge Laplacian when the digraph contains a directed spanning tree, considering both the SB (Lemma~\ref{lemma3a}) and SUB (Lemma~\ref{lemma3b}) cases. Then, Lemma~\ref{NE_SUB} studies the null space of these matrices. Finally, Lemma~\ref{lemma_multipleleaders} addresses the case where the digraph lacks a spanning tree and multiple leader groups are present.

For SB signed digraphs, we have the following regarding the rank properties of the Laplacian and incidence matrices.
\begin{lemma}\label{lemma3a}
	For a SB digraph containing a directed spanning tree, the following holds. 
	\begin{enumerate}[{\normalfont (i)}]
		\item $L_{s}$ has a simple zero eigenvalue and its other eigenvalues have positive real parts.
        \item $\mbox{rank}(E_s) =N-1$.
		\item $\mbox{rank}(L_{s}) = \mbox{rank}(L_{e_{s}}) = N-1$.
	\end{enumerate}
\end{lemma}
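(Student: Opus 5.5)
The plan is to reduce everything to the unsigned case through the gauge transformations of Lemmas~\ref{lemma_gauge} and \ref{lemma_edgegauge}, and then invoke the classical facts for unsigned digraphs with a directed spanning tree.

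\textbf{Step 1: part (i).} Since the digraph is SB, Lemma~\ref{lemma_gauge} gives $D\in\mathfrak{D}$ with $L := DL_sD$ an M-matrix with zero row sums. Indeed, $\ell_{s_{ii}}=\sum_k|a_{ik}|$ and $d_id_j a_{ij}=|a_{ij}|$ by the SB sign structure, so $L$ is the Laplacian of the unsigned digraph with adjacency $[|a_{ij}|]$. This digraph has the same edge set, hence the same directed spanning tree. By the standard result for unsigned digraphs, $L$ has a simple zero eigenvalue and all other eigenvalues have positive real parts. Because $D=D^{-1}$, the matrix $L_s$ is similar to $L$, and (i) follows.

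\textbf{Step 2: part (ii).} By Lemma~\ref{lemma_edgegauge}, $E=DE_sD_e$ is an unsigned incidence matrix, with each column containing one $+1$ and one $-1$. Since $D$ and $D_e$ are invertible, $\mathrm{rank}(E_s)=\mathrm{rank}(E)$. For an unsigned incidence matrix, $\mathcal N(E^\top)$ consists of vectors that are constant on each weakly connected component. A directed spanning tree makes the underlying graph connected, so $\mathcal N(E^\top)=\mathrm{span}(\mathbf 1)$ and $\mathrm{rank}(E)=N-1$. The one check needed here is that the edge gauge indeed yields exactly one $+1$ and one $-1$ per column. For a cooperative edge the two endpoints lie in the same $\mathcal V_p$, so the entries $+1,-1$ are multiplied by the same $d_i$; for an antagonistic edge the entries $+1,+1$ are multiplied by opposite $d_i$. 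In both cases $D_e$ fixes the overall sign.

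\textbf{Step 3: part (iii).} Part (i) gives $\mathrm{rank}(L_s)=N-1$ directly, since the zero eigenvalue is simple and therefore has geometric multiplicity one. For the edge Laplacian, the plan is to use the general identity that $AB$ and $BA$ share their nonzero eigenvalues with the same algebraic multiplicities. This identity is weaker than equality of ranks, so I would argue through factorizations instead. From \eqref{498}, $L_{e_s}=E_s^\top E_{s\odot}$, which gives $\mathrm{rank}(L_{e_s})\le\mathrm{rank}(E_s)=N-1$. For the reverse inequality, note that $L_s=E_{s\odot}E_s^\top$, and therefore
\begin{equation*}
E_s^\top L_s E_s^\top \;=\; E_s^\top E_{s\odot}E_s^\top E_s^\top \;=\; L_{e_s}E_s^\top E_s^\top .
\end{equation*}
The cleaner route, however, is
\begin{equation*}
L_{e_s}E_s^\top \;=\; E_s^\top E_{s\odot}E_s^\top \;=\; E_s^\top L_s .
\end{equation*}
Then $\mathrm{rank}(L_{e_s})\ge\mathrm{rank}(L_{e_s}E_s^\top)=\mathrm{rank}(E_s^\top L_s)$. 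This last rank equals $\mathrm{rank}(L_s)-\dim\bigl(\mathrm{Im}(L_s)\cap\mathcal N(E_s^\top)\bigr)$.

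\textbf{Main obstacle.} The step I expect to be hardest is showing $\mathrm{Im}(L_s)\cap\mathcal N(E_s^\top)=\{0\}$. Here $\mathcal N(E_s^\top)$ is spanned by $D\mathbf 1$, the gauge vector, and $\mathcal N(L_s)$ is spanned by $D\mathbf 1$ as well, because row sums of $DL_sD$ are zero. The question is therefore whether $D\mathbf 1\in\mathrm{Im}(L_s)$. That would require zero to have a Jordan block of size at least two, which is impossible since the zero eigenvalue is algebraically simple by (i). Hence the intersection is trivial, giving $\mathrm{rank}(L_{e_s})\ge N-1$, and so $\mathrm{rank}(L_{e_s})=N-1$.
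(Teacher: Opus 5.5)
Your proof is correct. Parts (i) and (ii) follow the paper: gauge and edge-gauge transformations, then the classical unsigned results. Your check that $DE_sD_e$ has exactly one $+1$ and one $-1$ per column is a welcome detail.

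For (iii), your lower bound $\mathrm{rank}(L_{e_s})\ge N-1$ takes a different route. The paper maps eigenvectors: if $L_s v=\lambda v$ with $\lambda\neq 0$, then $E_s^\top v$ is an eigenvector of $L_{e_s}$ for $\lambda$. It then argues that both matrices have $N-1$ nonzero eigenvalues and combines this with the upper bound $\mathrm{rank}(L_{e_s})\le\min\{\mathrm{rank}(E_s^\top),\mathrm{rank}(E_{s\odot})\}=N-1$. You use the same intertwining $L_{e_s}E_s^\top=E_s^\top L_s$, but at the level of subspaces. Semisimplicity of the zero eigenvalue of $L_s$ makes $E_s^\top$ injective on $\mathrm{Im}(L_s)$, so $\mathrm{rank}(L_{e_s})\ge\mathrm{rank}(E_s^\top L_s)=\mathrm{rank}(L_s)=N-1$.

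What your version buys is that it avoids multiplicity bookkeeping. Read literally, the paper's eigenvector argument only shows that each nonzero eigenvalue of $L_s$ appears in the spectrum of $L_{e_s}$. To count $N-1$ of them, one must add the standard fact that $AB$ and $BA$ share nonzero eigenvalues with equal algebraic multiplicities. Your remark that this identity is too weak is not quite right, though. It does suffice for the lower bound, since rank is at least the number of nonzero eigenvalues counted with algebraic multiplicity, and that is how the paper's argument is meant to close.

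Your ``main obstacle'' is also easier than you make it. Since $L_s=E_{s\odot}E_s^\top$, you always have $\mathcal{N}(E_s^\top)\subseteq\mathcal{N}(L_s)$. Hence $\mathrm{Im}(L_s)\cap\mathcal{N}(E_s^\top)\subseteq\mathrm{Im}(L_s)\cap\mathcal{N}(L_s)=\{0\}$ by semisimplicity, without identifying $D\mathbf{1}$ explicitly. The first displayed identity $E_s^\top L_sE_s^\top=L_{e_s}E_s^\top E_s^\top$ is an unused detour and can be dropped.
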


\begin{proof}
\textbf{(i)} By assumption, the digraph is SB and contains a directed spanning tree, so we apply the gauge transformation (Lemma \ref{lemma_gauge}) to $L_{s}$, and the statement follows directly from \cite[Lemma 3.3]{1431045}.\\
\textbf{(ii)} By assumption, the digraph is SB and contains a directed spanning tree, so we apply the edge-gauge transformation (Lemma \ref{lemma_edgegauge}) to $E_{s}$. The statement follows directly from \cite{thulasiraman2011graphs}.\\
\textbf{(iii)} We follow the steps in the proof of \cite[Lemma 2]{zeng2014nonlinear}. Suppose $\lambda \neq 0$ is an eigenvalue of $L_{s}$, which is associated with a nonzero eigenvector $v_{r}$, such that $L_{s} v_{r} = E_{s \odot}E_{s}^\top v_{r} = \lambda v_{r} \neq 0.$ It is clear that $E_{s}^\top v_{r} \neq 0$. Let $\bar v_{r} = E_{s}^\top v_{r}$. Then, by left-multiplying $E_{s}^\top$ on both sides, we obtain $E_{s}^\top E_{s \odot}E_{s}^\top v_{r} = E_{s}^\top \lambda v_{r}$. By replacing \eqref{498} in the latter, we obtain $L_{e_{s}} \bar v_{r} = \lambda \bar v_{r},$ which means $\lambda$ is also an eigenvalue of $L_{e_{s}}$. Consequently, the nonzero eigenvalues of $L_{s}$ and $L_{e_{s}}$ are equal to each other. Then, according to Item (ii), proven above, and from Lemma \ref{lemmaE_in}, we have $\mbox{rank}(L_{{s}}) \leq \min \{\mbox{rank}(E_{s \odot}), \mbox{rank}(E_{s}^\top)\} = N-1$ and $\mbox{rank}(L_{e_{s}}) \leq \min \{\mbox{rank}(E_{s}^\top), \mbox{rank}(E_{s \odot})\} = N-1$. Since they both have $N-1$ nonzero eigenvalues, the statement follows.
\end{proof}

Unlike the SB case, when the digraph is SUB and contains a directed spanning tree, then its structure falls into one of three distinct cases: the presence of a SUB-rooted SCC, a SB-rooted SCC, or a root node, as detailed in the following lemma.
\begin{lemma}\label{lemma3b}
For a SUB digraph containing a directed spanning tree, the following holds. 
\begin{enumerate}[{\normalfont (i)}]
    \item $L_{s}$ has only eigenvalues with positive real parts if and only if the digraph contains a SUB-rooted SCC.
    \item $L_{s}$ has a simple zero eigenvalue and its other eigenvalues have positive real parts if and only if the digraph contains a SB-rooted SCC or a root node.
    \label{lemma3b_ii}
    \item $\mbox{rank}(E_s) = N$.
     \item $\mbox{rank}(L_{s}) = \mbox{rank}(L_{e_{s}})$ if and only if the digraph contains a SUB-rooted SCC or a root node. Moreover, 
     \[
     \mbox{rank}(L_{s}) = \begin{cases}
         N, & \text{if $\mathcal{G}_s$ contains SUB-rooted SCC}
         \\
         N-1,  & \text{if $\mathcal{G}_s$ contains a root node.}
     \end{cases}
     \]\label{lemma3b_iii}
     \item $N-1=\mbox{rank}(L_{s})<\mbox{rank}(L_{e_{s}})=N$ if and only if the digraph contains a SB-rooted SCC.
\end{enumerate}
\end{lemma}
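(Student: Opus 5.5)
We use the same ingredients as in the proof of Lemma~\ref{lemma3a}, namely the condensation structure of a digraph with a directed spanning tree together with Lemma~\ref{lemmaE_in}. Under Assumption~\ref{ass1}, exactly one of three situations occurs: the graph has a root node (a single-node rooted component), a SB-rooted SCC, or a SUB-rooted SCC, and these cases are mutually exclusive. It therefore suffices to compute the spectrum of $L_s$ in each case; the ``if and only if'' statements in (i), (ii), (iv), (v) then follow by exclusion.

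For (i)--(ii), order the nodes so that the rooted component comes first. Since it has no incoming edges, $L_s$ is block lower triangular:
\[
L_s=\begin{bmatrix} L_{11} & 0\\ L_{21} & L_{22}\end{bmatrix},
\]
where $L_{11}$ is the signed Laplacian of the rooted component. The block $L_{22}$ is the signed Laplacian of the followers plus a nonnegative diagonal term. This term is nonzero on at least one node of every follower class reachable from the root. A Gershgorin/M-matrix argument, applied after replacing signs by absolute values (for $|L_{22}|$-type comparison one uses $\rho(|\cdot|)$ domination, as in \cite{hu2014emergent,meng2017bipartite}), shows that $L_{22}$ has all eigenvalues with positive real parts. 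Hence the zero eigenvalues of $L_s$ are those of $L_{11}$:
\begin{itemize}
\item for a root node, $L_{11}=0$, giving a simple zero;
\item for a SB strongly connected component, gauge transformation (Lemma~\ref{lemma_gauge}) and Perron--Frobenius give a simple zero;
\item for a SUB strongly connected component, the standard result of \cite{altafini2012_6329411} gives that $L_{11}$ is nonsingular with eigenvalues of positive real part.
\end{itemize}
By Lemma~\ref{lemma_gauge} style arguments on $L_{11}$, these eigenvalues are in the closed right half plane, which gives (i) and (ii). The main obstacle is the claim about $L_{22}$ for signed graphs. I would handle it by noting that $\lvert\lambda\rvert$-based bounds reduce it to the unsigned grounded Laplacian, which is a nonsingular M-matrix since each follower is reachable from the root.

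For (iii), the graph is SUB, so it contains a cycle (possibly through a digon) with an odd number of negative edges. I would show that $E_s^\top x=0$ forces $x=0$. Each row of $E_s^\top$ imposes $x_i=\mp x_j$ along edge $k$, with sign $x_i=\mathrm{sign}(a_{ij})x_j$ up to the orientation conventions in \eqref{434}. Propagating along the underlying connected graph gives $\lvert x_i\rvert$ constant and signs determined by paths. A negative cycle then forces $x=-x$ on it, hence $x=0$. Therefore $\mathrm{rank}(E_s)=N$.

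For (iv)--(v), the argument of Lemma~\ref{lemma3a}(iii) shows that the nonzero eigenvalues of $L_s=E_{s\odot}E_s^\top$ and $L_{e_s}=E_s^\top E_{s\odot}$ coincide. I would then use ranks rather than multiplicities, since $\mathrm{rank}(L_{e_s})=\mathrm{rank}(E_s^\top E_{s\odot})=\mathrm{rank}(E_{s\odot})$ because $E_s^\top$ is injective by (iii). Lemma~\ref{lemmaE_in} then gives $\mathrm{rank}(L_{e_s})=N-1$ when there is a root node and $N$ otherwise. From (i)--(ii), $\mathrm{rank}(L_s)=N$ for a SUB-rooted SCC and $N-1$ for a SB-rooted SCC or a root node; for the latter, the geometric multiplicity follows from the simplicity of the zero eigenvalue. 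Comparing the three cases:
\begin{itemize}
\item SUB-rooted SCC: $N=N$;
\item root node: $N-1=N-1$;
\item SB-rooted SCC: $N-1<N$.
\end{itemize}
This yields (iv) and (v), with the ``only if'' directions following from the mutual exclusivity of the three cases.
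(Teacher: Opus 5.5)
Your proof is correct, but for (i)--(iii) it takes a different, self-contained route. The paper simply cites \cite[Lemma 5]{du2019further} for (i)--(ii) and \cite[Lemma 2]{du2019further} for (iii). You instead localize the zero eigenvalues of $L_s$ in the unique rooted component via the block lower-triangular form, and prove (iii) by sign propagation.

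The one step you leave vague is the positive stability of the follower block $L_{22}$. It can be stated precisely as follows. $L_{22}$ has positive diagonal, and its comparison matrix (off-diagonal entries replaced by $-\lvert a_{ij}\rvert$) is the unsigned grounded Laplacian. That matrix is a nonsingular M-matrix by reachability, so there is $d>0$ with (comparison matrix)$\,d$ entrywise positive. Applying Gershgorin to $\mathrm{diag}(d)^{-1}L_{22}\,\mathrm{diag}(d)$ then places all eigenvalues in the open right half-plane.

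For (iii) you can skip the negative-cycle characterization. A nonzero $x$ with $E_s^\top x=0$ has constant modulus on the connected underlying graph, and its sign pattern is itself a structurally balanced bipartition, contradicting SUB.

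For (iv)--(v) you do in substance what the paper does. The paper combines $\mathrm{rank}(L_{e_s})\le\mathrm{rank}(E_{s\odot})$ with eigenvalue counting in one direction and Sylvester's inequality in the other. With $\mathrm{rank}(E_s)=N$, this is exactly your identity $\mathrm{rank}(L_{e_s})=\mathrm{rank}(E_{s\odot})$. Your version treats the three cases uniformly and gets the ``only if'' directions by exhaustion rather than contradiction.

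The paper's route is shorter because it outsources (i)--(iii). Yours makes explicit where reachability and the balance of the rooted component enter. The same block-triangular argument also directly yields the count $l_1+l_{2_{\mathrm{SB}}}$ of zero eigenvalues in Lemma~\ref{lemma_multipleleaders}(i).
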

\begin{proof}
\textbf{(i)}-\textbf{(ii)} follow from \cite[Lemma 5]{du2019further}.
\\
% ----------- (iii)
\textbf{(iii)} Since, SB and SUB are mutually exclusive properties, the statement follows (\cite[Lemma 2]{du2019further}).
\\
% ----------- (iv)
\textbf{(iv), ``$\Leftarrow$'': }
Assume that the digraph contains a SUB-rooted SCC or a root node. We separate the two cases. 

When a SUB digraph contains a SUB-rooted SCC, from Item (i), we know that $L_s$ has $N$ nonzero eigenvalues. On the other hand, from Lemma \ref{lemmaE_in}, $\mbox{rank}(E_{s \odot}) = N$ since the digraph does not contain a root node. So, $\mbox{rank}(L_{{s}}) \leq \min \{\mbox{rank}(E_{s \odot}), \mbox{rank}(E_{s}^\top)\} = N$ and $\mbox{rank}(L_{e_{s}}) \leq \min \{\mbox{rank}(E_{s}^\top), \mbox{rank}(E_{s \odot})\} = N$. Since, from the proof of Item (iii) of Lemma \ref{lemma3a}, $L_{s}$ and $L_{e_{s}}$ have the same nonzero eigenvalues, meaning that they both have $N$ nonzero eigenvalues, the statement $\mbox{rank}(L_{s}) = \mbox{rank}(L_{e_{s}}) = N$ follows.

Instead, when a SUB digraph contains a root node, from Item (ii), we know that $L_s$ has a unique zero eigenvalue and $N-1$ nonzero eigenvalues. Moreover, from Lemma \ref{lemmaE_in}, $\mbox{rank}(E_{s \odot}) = N-1$ since the digraph contains a root node. Then, the statement $\mbox{rank}(L_{s}) = \mbox{rank}(L_{e_{s}}) = N-1$ follows.
\\
\textbf{(iv), ``$\Rightarrow$'':}
Assume that $\mbox{rank}(L_{s}) = \mbox{rank}(L_{e_{s}})$. Moreover, assume (by contradiction) that the digraph contains an SB-rooted SCC (i.e., it does not contain a SUB-rooted SCC or a root node). 
Using Item (ii), this implies $\mbox{rank}(L_{s})<N$.
Moreover, from \cite[Section~0.4.5]{Horn}, we know that 
$\mbox{rank}(L_{e_{s}}) \ge \mbox{rank}(E_{s}^\top) + \mbox{rank}(E_{s \odot}) - N $.
Then, using Item (iii) and Lemma~\ref{lemmaE_in}, we conclude that $\mbox{rank}(L_{e_{s}}) = N \ne N-1 = \mbox{rank}(L_{s}) $, obtaining a contradiction.
\\
\textbf{(v)} follows from the proof of Item (iv).
% \textbf{(v)} Observe that Item~\eqref{lemma3b_iii} of Lemma~\ref{lemma3b} implies that $\mbox{rank}(L_{s})\ne \mbox{rank}(L_{e_{s}})$ if and only if the digraph contains an SB-rooted SCC. 
% Then, the statement follows from $\mbox{rank}(L_{s}) \le \mbox{rank}(L_{e_{s}})$ and Item~\eqref{lemma3b_ii} of Lemma~\ref{lemma3b}, from which we conclude that $N-1=\mbox{rank}(L_{s})<\mbox{rank}(L_{e_{s}})=N$ if and only if the digraph contains an SB-rooted SCC.
%However, since $\mbox{rank}(L_{s}) \le \mbox{rank}(L_{e_{s}})$, then, using Item~\eqref{lemma3b_ii} of Lemma~\ref{lemma3b}, we can conclude that $N-1=\mbox{rank}(L_{s})<\mbox{rank}(L_{e_{s}})=N$ if and only if the digraph contains an SB-rooted SCC.
\end{proof}

\begin{lemma} (\cite[Lemma 7]{du2019further}) \label{NE_SUB}
For a SB digraph containing a directed spanning tree, $\mathcal{N}(L_{e_s}^\top) = \mathcal{N}(E_s)$ holds. For a SUB digraph containing a directed spanning tree:
    	\begin{enumerate}[{\normalfont (i)}]
		\item if the signed digraph does not contain a root node, then $\mathcal{N}(L_{e_s}^\top) = \mathcal{N}(E_s)$ holds.
		\item if the signed digraph contains a root node, then $\mathcal{N}(L_{e_s}^\top) \neq \mathcal{N}(E_s)$ holds.
	\end{enumerate}
\end{lemma}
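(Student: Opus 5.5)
Since $L_{e_s}^\top = E_{s\odot}^\top E_s$, the inclusion $\mathcal{N}(E_s)\subseteq \mathcal{N}(L_{e_s}^\top)$ holds for every signed digraph: if $E_s w=0$ then $E_{s\odot}^\top E_s w = 0$. So in every case the question is whether the reverse inclusion holds. The plan is to reduce that to a dimension count, because two nested subspaces coincide if and only if they have the same dimension, i.e., iff $\mbox{rank}(L_{e_s}) = \mbox{rank}(E_s)$. All the needed ranks are already available in Lemmata \ref{lemmaE_in}, \ref{lemma3a} and \ref{lemma3b}.

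\textbf{SB case.} By Lemma \ref{lemma3a}(ii)--(iii), $\mbox{rank}(E_s)=N-1=\mbox{rank}(L_{e_s})$. Hence $\dim\mathcal{N}(E_s)=M-N+1=\dim\mathcal{N}(L_{e_s}^\top)$, and together with the inclusion this gives equality.

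\textbf{SUB case, item (i).} With no root node, the spanning-tree assumption leaves two possibilities: the digraph contains a SUB-rooted SCC or a SB-rooted SCC. By Lemma \ref{lemma3b}(iv)--(v), in both cases $\mbox{rank}(L_{e_s})=N$, and Lemma \ref{lemma3b}(iii) gives $\mbox{rank}(E_s)=N$. The dimensions agree, so equality follows. An alternative route is the rank argument: since $\mbox{rank}(E_{s\odot})=N$ by Lemma \ref{lemmaE_in}, $E_{s\odot}^\top$ is injective, so $E_{s\odot}^\top E_s w=0$ forces $E_s w=0$. This gives the reverse inclusion directly and is cleaner.

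\textbf{SUB case, item (ii).} With a root node, Lemma \ref{lemma3b}(iv) gives $\mbox{rank}(L_{e_s})=N-1$, while Lemma \ref{lemma3b}(iii) gives $\mbox{rank}(E_s)=N$. Therefore $\dim\mathcal{N}(L_{e_s}^\top)=M-N+1>M-N=\dim\mathcal{N}(E_s)$, and the inclusion is strict. Concretely, if $r$ is the root node, take $w$ with $E_s w=e_r$ (solvable because $E_s$ has full row rank). Row $r$ of $E_{s\odot}$ is zero, so $E_{s\odot}^\top e_r=0$, and $w\in\mathcal{N}(L_{e_s}^\top)\setminus\mathcal{N}(E_s)$.

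The main obstacle is making sure the case split in item (i) is exhaustive and that the cited rank facts apply. In particular, Lemma \ref{lemma3b}(v) is needed for the SB-rooted SCC subcase, where $\mbox{rank}(L_s)\neq\mbox{rank}(L_{e_s})$. Using the injectivity of $E_{s\odot}^\top$ sidesteps this, since it relies only on Lemma \ref{lemmaE_in}, so I would present that version.
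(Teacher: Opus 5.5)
Your proof is correct and follows essentially the paper's route: the paper only cites the factorization $L_{e_s}=E_s^\top E_{s\odot}$ together with the ranks of $E_s$ and $L_{e_s}$ (Lemmata~\ref{lemma3a} and \ref{lemma3b}) and of $E_{s\odot}$ (Lemma~\ref{lemmaE_in}). You make the implicit steps explicit, namely the inclusion $\mathcal{N}(E_s)\subseteq\mathcal{N}(L_{e_s}^\top)$ and the dimension count; the injectivity of $E_{s\odot}^\top$ for item (i) and the explicit witness $w$ with $E_s w=e_r$ for item (ii) are clean refinements of the same idea.
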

\begin{proof} The statement follows directly from the definition of $L_{e_s}$ in \eqref{498}, the rank properties of $E_s$ and $L_{e_s}$ in Lemmata \ref{lemma3a} and \ref{lemma3b}, and the rank properties of $E_{s \odot}$ of in Lemma \ref{lemmaE_in}.
\end{proof}

Now, we consider signed digraphs containing multiple leader groups, meaning that the signed digraph does not contain a directed spanning tree, since root nodes and rooted SCCs have no incoming edges--- See Figure \ref{leadergroups}. The following lemma is an original contribution of this paper.

\begin{lemma}\label{lemma_multipleleaders}
Consider a signed digraph containing $m$ leader groups, where $m > 1$. Let $l_1\geq0$ be the number of root nodes, $l_{2_{\mathrm{SB}}}\geq0$ be the number of SB-rooted SCCs, and $l_{2_{\mathrm{SUB}}}\geq0$ be the number of SUB-rooted SCCs, where $m=l_1 + l_2$ and $l_2 = l_{2_{\mathrm{SB}}} +l_{2_{\mathrm{SUB}}}$. Assume that given each follower $\nu_j$, there exists at least one leader $\nu_i$, such that there exists at least one path from $\nu_i$ to $\nu_j$. Then, the following holds.
\begin{enumerate}[{\normalfont (i)}]
    \item $L_s$ has $l_1 + l_{2_{\mathrm{SB}}}$ zero eigenvalues and its other eigenvalues have positive real parts.
    \item $\mbox{rank}(E_{s \odot}) = N-l_1$.% if all the rooted SCCs are SB, and $\mbox{rank}(E_{s \odot}) = N-m$.
    \item $\mbox{rank}(E_{s}) = N-1$ if the digraph is SB, and $\mbox{rank}(E_{s}) = N$ otherwise.
    \item $\mbox{rank}(L_{s}) = N-l_1-l_{2_{\mathrm{SB}}}$.
    \item $N - l_1 - 1 \leq \mbox{rank}(L_{e_{s}}) \leq N-l_1$. Moreover, if the digraph is SB, 
        \begin{enumerate}[{\normalfont (a)}]
            \item $\mbox{rank}(L_{e_{s}})=N-1$, if $l_1=0$,% and $l_{2_{SB}}\geq 2$,
            \item $\mbox{rank}(L_{e_{s}})=N-l_1$, if $l_{2_{\mathrm{SB}}}=0$.% and $l_1 \geq 2$,
        \end{enumerate}
        Otherwise, $\mbox{rank}(L_{e_{s}}) = N-l_1$.% for $l_1 \geq 0$.
    \item $\mathcal{N}(L_{e_s}^\top) = \mathcal{N}(E_s)$, if all leader groups are rooted SCCs and $\mathcal{N}(L_{e_s}^\top) \neq \mathcal{N}(E_s)$, if the digraph contains at least one root node. %or if the digraph contains a root node and is SB
	\end{enumerate}
\end{lemma}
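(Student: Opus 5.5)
We order the nodes so that the $k$ leader nodes come first, grouped by leader group, followed by the followers. Since leader groups have no incoming edges, $L_s$ is block lower-triangular:
\[
L_s=\begin{bmatrix} \mathrm{blkdiag}(L_{s,i}) & 0\\ L_{fl} & L_{ff}\end{bmatrix},
\]
where $L_{s,i}$ is the signed Laplacian of the $i$th leader group. Each $L_{s,i}$ falls into one of three cases:
\begin{itemize}
\item a root node gives $L_{s,i}=0$;
\item a SB-rooted SCC is strongly connected and SB, so by Lemma~\ref{lemma3a}(i) it has a simple zero eigenvalue;
\item a SUB-rooted SCC is strongly connected and SUB without a root, so by Lemma~\ref{lemma3b}(i) all its eigenvalues have positive real parts.
\end{itemize}

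For item (i) it then suffices to show that $L_{ff}$ is nonsingular with spectrum in the open right half-plane. We will argue that $L_{ff}$ is a signed Laplacian-like matrix whose diagonal dominance $\ell_{jj}=\sum_k|a_{jk}|\ge\sum_{k\in\mathcal F,k\ne j}|a_{jk}|$ is strict at every follower that receives an edge from a leader. By the path assumption, every follower is reached from such a node. A standard Gershgorin/irreducible-dominance argument on $|L_{ff}|$ (comparison with the M-matrix $\mathrm{diag}(L_{ff})-|\text{offdiag}|$) then gives Re$\lambda>0$. Summing the zero eigenvalues block by block yields $l_1+l_{2_{SB}}$.

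Item (iv) follows from (i), provided the zero eigenvalues are semisimple. This holds because each leader block contributes an independent null vector, and $L_{ff}$ is invertible, so each null vector extends uniquely to the followers.

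For item (ii), the rows of $E_{s\odot}$ corresponding to root nodes are zero. Every other node has an incoming edge, and these in-edges can be chosen disjointly, one per node (the $k$th column has a single nonzero entry, located at its terminal node). This gives $N-l_1$ independent columns.

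For item (iii), we apply Lemmas~\ref{lemma_gauge}--\ref{lemma_edgegauge} in the SB case. The underlying undirected graph is connected, since by the path assumption every node is linked to some leader and every leader group is linked to the followers. Hence the rank of the unsigned incidence matrix is $N-1$. In the SUB case we invoke the argument of Lemma~\ref{lemma3b}(iii): a negative cycle makes $E_s^\top$ injective.

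For item (v), we use the Sylvester bounds
\[
\mathrm{rank}(E_s^\top)+\mathrm{rank}(E_{s\odot})-N\le \mathrm{rank}(L_{e_s})\le \min\{\mathrm{rank}(E_s),\mathrm{rank}(E_{s\odot})\}.
\]
Combined with (ii)--(iii), these give $N-l_1-1\le\mathrm{rank}\le N-l_1$. In the SUB case the lower bound already equals $N-l_1$. In the SB case with $l_1=0$ the upper bound equals $N-1$. For the SB case with $l_{2_{SB}}=0$ (only roots, since the SB assumption excludes SUB-SCCs within SB components), we compare with (iv). The nonzero eigenvalues of $L_s$ and $L_{e_s}$ coincide by the argument in Lemma~\ref{lemma3a}(iii), so $\mathrm{rank}(L_{e_s})\ge N-l_1$ once the zero eigenvalue of $L_{e_s}$ is shown to be semisimple.

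\textbf{Main obstacle.} I expect this semisimplicity to be the hardest step. I would try the identity $\mathcal N(L_{e_s})\cap\mathrm{Im}(L_{e_s})=\{0\}$ via $E_s^\top$ mapping generalized eigenvectors, and fall back on an explicit dimension count of $\mathcal N(E_{s\odot})$ restricted to $\mathrm{Im}(E_s^\top)$.

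For item (vi), note that $\mathcal N(E_s)\subseteq\mathcal N(L_{e_s}^\top)$ always holds, since $L_{e_s}^\top=E_{s\odot}^\top E_s$. Equality then reduces to comparing dimensions, $M-\mathrm{rank}(E_s)$ versus $M-\mathrm{rank}(L_{e_s})$, using (iii) and (v). With no roots ($l_1=0$) the two ranks coincide in both the SB and SUB cases. With $l_1\ge1$ we get $\mathrm{rank}(L_{e_s})\le N-l_1<\mathrm{rank}(E_s)$ whenever $l_1\ge1$ in the SUB case and whenever $l_1\ge 1$ together with $N-l_1<N-1$ in the SB case. The boundary SB case $l_1=1$ therefore needs a separate check, and for it I would exhibit a nonzero vector of $\mathcal N(L_{e_s}^\top)\setminus\mathcal N(E_s)$ built from the in-edges of followers.
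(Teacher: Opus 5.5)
Your arguments for (i), (ii) and (iv) are sound, and they are more self-contained than the paper's, which only cites external lemmas. The block lower-triangular form of $L_s$, the comparison-matrix argument making $L_{ff}$ positive stable, and the extension of leader null vectors through $L_{ff}^{-1}$ give algebraic and geometric multiplicity $l_1+l_{2_{\mathrm{SB}}}$ directly.

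\textbf{The gap is in (vi).} You correctly isolate the SB case with $l_1=1$ as undecided by the rank comparison. However, the vector you plan to exhibit there does not exist: in that case $\mathcal{N}(L_{e_s}^\top)=\mathcal{N}(E_s)$, so the second claim of (vi) is false as stated.

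Let $\delta_i\in\mathbb{R}^N$ be the $i$th canonical vector and $\mathcal{R}$ the set of root indices. Your own argument for (ii) shows $\mathrm{Im}(E_{s\odot})=\mathrm{span}\{\delta_i:i\notin\mathcal{R}\}$, equivalently $\mathcal{N}(E_{s\odot}^\top)=\mathrm{span}\{\delta_r:r\in\mathcal{R}\}$. Hence $\mathcal{N}(L_{e_s}^\top)=\{y:\ E_sy\in\mathrm{span}\{\delta_r:r\in\mathcal{R}\}\}$. This equals $\mathcal{N}(E_s)$ iff $\mathrm{Im}(E_s)\cap\mathrm{span}\{\delta_r:r\in\mathcal{R}\}=\{0\}$. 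For a connected SB digraph, $\mathrm{Im}(E_s)=(D\mathbf{1}_N)^\perp$. If $\mathcal{R}=\{r\}$, then $(D\mathbf{1}_N)^\top\delta_r=d_r=\pm1\neq0$, so the intersection is trivial.

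Concretely, take all edges cooperative, a root $\nu_1$, an SB-rooted SCC $\nu_2\leftrightarrow\nu_3$, and a follower $\nu_4$ with in-edges from $\nu_1$ and $\nu_2$. This satisfies Assumptions~\ref{standing_ass} and~\ref{ass2} with $N=M=4$. Yet $\mathrm{rank}(E_s)=\mathrm{rank}(L_{e_s})=3$, and both null spaces are spanned by the indicator vector of the two SCC edges.

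The paper's proof of (vi) has the same hole. For $l_1=1$ the bound $N-2\le\mathrm{rank}(L_{e_s})\le N-1$ decides nothing, and the paper's own Table entry $\gamma=M-N+l_1$ in fact gives $\mathrm{rank}(L_{e_s})=N-1=\mathrm{rank}(E_s)$. The correct statement, for a connected underlying graph, is that $\mathcal{N}(L_{e_s}^\top)\neq\mathcal{N}(E_s)$ iff the digraph is SUB with $l_1\ge1$, or SB with $l_1\ge2$. The same observation gives $\mathrm{rank}(L_{e_s})=N-l_1-\dim\bigl(\mathcal{N}(E_s^\top)\cap\mathrm{span}\{\delta_i:i\notin\mathcal{R}\}\bigr)$, which settles every case of (v) at once.

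\textbf{Two smaller points.} First, the ``main obstacle'' you flag in (v)(b) is not one. For any square matrix, $\mathrm{rank}=M-\gamma\ge M-\xi$, and $M-\xi$ is the number of nonzero eigenvalues counted algebraically. The identity $\lambda^{N}\det(\lambda I_M-E_s^\top E_{s\odot})=\lambda^{M}\det(\lambda I_N-E_{s\odot}E_s^\top)$ makes this number $N-l_1-l_{2_{\mathrm{SB}}}$. With $l_{2_{\mathrm{SB}}}=0$ (so $l_1=m\ge2$), this meets your Sylvester upper bound $N-l_1$, exactly as in the paper. No semisimplicity of the zero eigenvalue of $L_{e_s}$ is needed.

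Second, in (iii), connectivity of the underlying graph does not follow from the lemma's hypothesis that each follower is reached by \emph{some} leader. The cooperative digraph with edges $\nu_1\to\nu_2$ and $\nu_3\to\nu_4$ satisfies that hypothesis, yet $\mathrm{rank}(E_s)=N-2$. There, both null spaces in (vi) are also trivial even though $l_1=2$. You need Item 2 of Assumption~\ref{ass2} (every leader reaches every follower, with $k<N$), which links all nodes through any single follower. The paper's proof of (iii) cites the spanning-tree lemmas and tacitly relies on the same assumption.
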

\begin{proof} \textbf{(i)} From \cite{altafini2012_6329411}, we have that a strongly-connected component adds a zero eigenvalue to the signed Laplacian if and only if it is SB. This, together with \cite[Lemma 1]{csekerciouglu2024distributed} proves that $L_s$ has $l_1 + l_{2_{SB}}$ zero eigenvalues. \\
\textbf{(ii)} If the digraph has $l_1$ root nodes, they have no incoming edges, meaning the rows associated with the roots in the in-incidence matrix have only zeros, while all other nodes have at least one incoming edge. As a result, $\mbox{rank}(E_{s \odot}) = N-l_1$. If $l_1=0$, then, $\mbox{rank}(E_{s \odot}) = N$. \\
\textbf{(iii)} The statement follows from Item (ii) of Lemma \ref{lemma3a} and Item (iii) of Lemma \ref{lemma3b}. \\
\textbf{(iv)} From \cite[Section 0.4.5]{Horn}, it follows that $N-(l_1 + l_{2_{SB}}) \leq \mbox{rank}(L_{{s}}) \leq \min \{\mbox{rank}(E_{s \odot}), \mbox{rank}(E_{s}^\top)\}$. From (i), $L_s$ has exactly $N-(l_1 + l_{2_{SB}})$ nonzero eigenvalues. Moreover, from \cite[Lemma 1]{csekerciouglu2024distributed} we have that for $L_s$, the zero eigenvalue is semisimple, meaning its algebraic multiplicity equals its geometric multiplicity. Therefore the statement follows.\\ 
%\textbf{(iv)} From (i), we know that $L_s$ has $N-(l_1 + l_{2_{SB}})$ nonzero eigenvalues. So, $N-(l_1 + l_{2_{SB}}) \leq \mbox{rank}(L_{{s}}) \leq \min \{\mbox{rank}(E_{s \odot}), \mbox{rank}(E_{s}^\top)\}$ \cite[0.4.5]{Horn}, and the statement follows. \\
\textbf{(v)} First, from the proof of Item (iii) of Lemma \ref{lemma3a}, we know that $L_{s}$ and $L_{e_{s}}$ have the same nonzero eigenvalues, so 
$$N-(l_1 + l_{2_{\mathrm{SB}}}) \leq \mbox{rank}(L_{e_{s}}).$$
In addition, from \cite[0.4.5]{Horn}, we have 
\begin{align*}
    \mbox{rank}(E_{s}^\top) + \mbox{rank}(E_{s \odot}) - N &\leq \mbox{rank}(L_{e_{s}}) \\ 
    &\leq \min \{\mbox{rank}(E_{s}^\top), \mbox{rank}(E_{s \odot})\}.
\end{align*}
Then, for the case of SB digraphs, from (ii) and (iii), we have $\mbox{rank}(E_{s \odot}) = N-l_1$ and $\mbox{rank}(E_{s}) = N-1$. We also have that $l_{2_{\mathrm{SUB}}}=0$. For $l_1=0$, we have $N - 1 \leq \mbox{rank}(L_{e_{s}}) \leq N-1$, and (a) follows. For $l_{2_{\mathrm{SB}}}=0$, we have $N- l_1 \leq \mbox{rank}(L_{e_{s}}) \leq N-l_1$, and (b) follows. For the case of SUB digraphs, from (ii) and (iii), we have $\mbox{rank}(E_{s \odot}) = N-l_1$ and $\mbox{rank}(E_{s}) = N$. Then, we have $N-l_1 \leq \mbox{rank}(L_{e_{s}}) \leq N-l_1$, so $\mbox{rank}(L_{e_{s}})=N-l_1$, and if $l_1=0$, $\mbox{rank}(L_{e_{s}})=N$. Thus the statement follows.\\
\textbf{(vi)} In the case $l_1=0$, we have $\mbox{rank}({E_{s}}) = \mbox{rank}(L_{e_{s}})=N-1$ if the digraph is SB, and $\mbox{rank}({E_{s}}) = \mbox{rank}(L_{e_{s}})=N$, otherwise. Since $\mbox{rank}(E_{s \odot}) = N$, we can deduce that $\mathcal{N}(L_{e_s}^\top) = \mathcal{N}(E_s)$. In the case $l_1 \geq 1$, on the one hand, for SB digraphs, we have $N - l_1 - 1 \leq \mbox{rank}(L_{e_{s}}) \leq N-l_1$. For SUB digraphs, we have $\mbox{rank}(L_{e_{s}})=N-l_1$. On the other hand, $\mbox{rank}(E_{s}) = N-1$ for SB digraphs and $\mbox{rank}(E_{s}) = N$ for SUB digraphs, which indicates $\mathcal{N}(L_{e_s}^\top) \neq \mathcal{N}(E_s)$. Thus the statement follows.
\end{proof}

\begin{table*}[t]
    \centering
    \renewcommand{\arraystretch}{1.25}
    \caption{Geometric ($\gamma$) and algebraic ($\xi$) multiplicities of the zero eigenvalues of the directed signed edge Laplacian $L_{e_s}$. Note that $\gamma = \mathrm{dim}(\mathrm{ker}(L_{e_s})) = M-\mathrm{rank}(L_{e_s})$.} %The calculations/derivation of the geometric multiplicities follow from the fact]/OR just : [Note] that $\gamma = \mathrm{dim}(\mathrm{ker}(L_{e_s})) = M-\mathrm{rank}(L_{e_s})$.
    \begin{tabular}{|c|c|c|c|c|} %[Horn, Th. 2.4.11.1.b]
        \hline
         \textbf{Type} & \begin{tabular}{c} \textbf{Leader nodes} \end{tabular} & $\gamma$ & $\xi$ & Proof\\
         \hline \hline
         \multirow{4}{*}{SB} &  $l_1 = l_{2_{SB}} = 0$, or $l_1 =1$, or $l_{2_{SB}} =1$  & $M-N+1$  & $M-N+1$ & Item (iii) of Lemma \ref{lemma3a} \\
         \cline{2-5}
          & $l_1 > 1,\ l_{2_{SB}}=0$ & $M-N+l_1$ & $M-N+l_1$ & Items (iv)-(v) of Lemma \ref {lemma_multipleleaders}\\
         \cline{2-5}
          & $l_1 = 0,\ l_{2_{SB}} > 1$ & $M-N+1$ & $M-N+l_{2_{SB}}$ & Items (iv)-(v) of Lemma \ref {lemma_multipleleaders} \\
         \cline{2-5}
          & $l_1 \geq 1,\ l_{2_{SB}} \geq 1$ & $M-N+l_1$ & $M-N+l_1+l_{2_{SB}}$ & Items (iv)-(v) of Lemma \ref {lemma_multipleleaders} \\
         \hline
        \multirow{5}{*}{SUB}  & $l_1 = 1$, $l_{2_{SB}}=l_{2_{SUB}} = 0$ & $M-N+1$ & $M-N+1$ & Item (iv) of Lemma \ref{lemma3b}\\
        \cline{2-5}
         & $l_{2_{SB}} = 1,$ $l_1 = l_{2_{SUB}} =0$ & $M-N$ & $M-N+1$ & Item (v) of Lemma \ref{lemma3b}\\ %&\begin{tabular}{c}  $v_l^\top v_r = 1$ iff $M=N$ \\ $v_l^\top v_r = 0$ otherwise \end{tabular}
        \cline{2-5}
         & $l_{2_{SUB}} = 1,$  $l_1 = l_{2_{SB}}= 0$ & $M-N$ & $M-N$ & Item (iv) of Lemma \ref{lemma3b}\\
        \cline{2-5}
         & $l_{2_{SB}} = 0$, $l_1,l_{2_{SUB}} \geq 0$  & $M-N+l_1$ & $M-N+l_1$ & Items (iv)-(v) of Lemma \ref {lemma_multipleleaders} \\
        \cline{2-5}
         &  $l_{2_{SB}} > 0$, $l_1,l_{2_{SUB}} \geq 0$  & $M-N+l_1$ & $M-N+l_1+l_{2_{SB}}$ & Items (iv)-(v) of Lemma \ref {lemma_multipleleaders} \\
    \hline \end{tabular}
    \label{tab:eigenvalues}
\end{table*}
\subsection{On the eigenvectors of signed edge Laplacians}\label{section3ev}
We know that for signed digraphs containing a directed spanning tree (Lemmata \ref{lemma3a} and \ref{lemma3b}), and for signed digraphs with multiple leader groups (Lemma \ref{lemma_multipleleaders}), the signed edge Laplacian generally has multiple zero eigenvalues. Consequently, there exist multiple right and left eigenvectors associated with each zero eigenvalue of $L_{e_s}$. Moreover, depending on the structure of the signed digraph, these zero eigenvalues can have different geometric ($\gamma$) and algebraic ($\xi$) multiplicities. This affects both the Jordan decomposition of $L_{e_s}$ and the properties of its associated eigenvectors. The algebraic and geometric multiplicity properties of signed edge Laplacians of signed graphs containing a directed spanning tree and a root node or a rooted SCC, were presented in \cite{du2019further}. In Table \ref{tab:eigenvalues}, we extend these results to a broader class of signed digraphs, including those with multiple leader groups, and provide a unified characterization of the algebraic and geometric multiplicities of the zero eigenvalues.
%summarize the geometric and algebraic multiplicities of the zero eigenvalues for \textcolor{blue}{a larger class of signed digraphs}, including the ones containing multiple leader groups. 
These properties are crucial for defining the edge states and the average edge system, which in turn are used to characterize synchronization errors and analyze the convergence behavior of agents in the OMAS in Section \ref{section5}.

Based on the information in Table \ref{tab:eigenvalues} and following the framework in \cite{Horn}, the Jordan canonical form of $L_{e_s}$ can be expressed as
$$L_{e_s} = U J U^{-1},$$ 
where $U = [v_{r_1} \ v_{r_2} \ \cdots \ v_{r_M}] \in \mathbb{C}^{M \times M}$ contains the right eigenvectors and $U^{-1} = [v_{l_1}^\top \ v_{l_2}^\top \ \cdots \ v_{l_M}^\top]^\top \in \mathbb{C}^{M \times M}$ contains the left eigenvectors. In the matrix $J =\mbox{blkdiag} [J_0,\ \bar J]$, $\bar J$ corresponds to the Jordan blocks associated with eigenvalues having positive real parts, and $J_0$ contains the Jordan blocks corresponding to the zero eigenvalues, and is given by
\begin{align}\label{jordan}
J_0 = \mathrm{blkdiag}\left(
\underbrace{
\begin{bmatrix} 0 & 1 \\ 0 & 0 \end{bmatrix},\ \dots, \ \begin{bmatrix} 0 & 1 \\ 0 & 0 \end{bmatrix}
}_{\xi-\gamma\ \text{Jordan blocks}},
\
\underbrace{
0,\ \dots, \ 0
}_{\substack{2\gamma-\xi\\ \text{zeros}}}
\right).
\end{align}

From \eqref{jordan}, the right and left eigenvectors associated with the zero eigenvalues of $L_{e_s}$ satisfy the following:
\begin{equation}\label{edge_ev}
\begin{aligned}
    L_{e_s} v_{r_1} &= 0 \quad &v_{l_1}^\top L_{e_s} &= v_{l_2}^\top\\
    L_{e_s} v_{r_2} &= v_{r_1} \quad &v_{l_2}^\top L_{e_s} &= 0\\
    &\vdots \quad & &\vdots\\
    L_{e_s} v_{r_{2(\xi-\gamma)-1}} &= 0 \quad &v_{l_{2(\xi-\gamma)-1}}^\top L_{e_s} &= v_{l_{2(\xi-\gamma)}}^\top\\
    L_{e_s} v_{r_{2(\xi-\gamma)}} &= v_{r_{2(\xi-\gamma)-1}} \quad &v_{l_{2(\xi-\gamma)}}^\top L_{e_s} &= 0\\
    L_{e_s} v_{r_{2(\xi-\gamma)+1}} &= 0 \quad &v_{l_{2(\xi-\gamma)+1}}^\top L_{e_s} &= 0\\
    &\vdots \quad & &\vdots\\
    L_{e_s} v_{r_{\xi}} &= 0 \quad &v_{l_{\xi}}^\top L_{e_s} &= 0.
\end{aligned}
\end{equation}
From Table \ref{tab:eigenvalues}, it is clear that whenever the graph contains at least one SB SCC in the graph, the geometric and algebraic multiplicities of the zero eigenvalue differ. As a consequence, for the first $2(\xi-\gamma)$ left and right eigenvectors, the product with $L_{e_s}$ is nonzero for every other pair, while the rest yield zero. On the other hand, the product involving the remaining $2\gamma - \xi$ eigenvectors is zero.

\begin{remark}\label{rmk:identity}
Throughout the paper, the left and right eigenvectors associated with the zero eigenvalues of $L_{e_s}$ are normalized such that $v_{l_i}^\top v_{r_i}=1$ for all $i \leq \xi$.
\end{remark}

\begin{remark} Recall from \eqref{498} that the signed Laplacian and signed edge-Laplacian can be written as $L_{e_s} = E_s^\top E_{s\odot} \in \mathbb{R}^{M\times M}$ and $L_s=E_{s\odot}E_s^\top \in \mathbb{R}^{N\times N}$, respectively. From the proof of Item (iii) of Lemma \ref{lemma3a}, we know that $L_{e_s}$ and $L_{s}$ have the same nonzero eigenvalues. From \cite[Lemma 1]{csekerciouglu2024distributed} we have that for $L_s$, the zero eigenvalue is semisimple, meaning its algebraic multiplicity equals its geometric multiplicity. Then, the size of every Jordan block corresponding to the zero eigenvalue is exactly $1\times1$. Let $A=E_s^\top$ and $B=E_{s\odot}$. It is well known that the sizes of the Jordan blocks associated with the zero eigenvalue of $AB$ and $BA$ may differ by at most one \cite{lippert2009jordan}. Then, this is also the case for $L_{e_s}$ and $L_{s}$. Consequently, the nilpotency index of the zero eigenvalue of $L_{e_s}$ is at most two, and the Jordan blocks associated with $\lambda=0$ are necessarily of size $1\times1$ or $2\times2$. This justifies the Jordan structure given in \eqref{edge_ev}.\end{remark}

\subsection{Lyapunov Equation for Directed Signed Edge Laplacians}\label{section4}
To establish synchronization of multi-agent systems over signed digraphs, we prove the exponential stability of the set $\{ \bar e = 0\}$, where $\bar e$ is defined in \eqref{def_e_bar}. In particular, we remind the results on how to construct strict Lyapunov functions, in the space of $\bar e$, referring to functions expressed in terms of $\bar e$, presented in \cite[Theorems 2--4]{TACPelin}.

For a signed digraph containing a directed spanning tree, we have the following.

\begin{thm} \label{thm:lyap_eq_Le_directed} (\cite[Theorem 3]{TACPelin}) Let $\mathcal{G}_s$ be a signed digraph of $N$ agents interconnected by $M$ edges. Let $L_{e_{s}}$ be the associated directed edge Laplacian, which contains $\xi$ zero eigenvalues. Then, the following are equivalent:
	\renewcommand{\theenumi}{(\roman{enumi}}
	\begin{enumerate}
		\item $\mathcal{G}_s$ contains a directed spanning tree,
        \item for any $Q \in \mathbb{R}^{M\times M}, Q=Q^{\top}> 0$ and for any $\{ \alpha_{1}, \alpha_{2}, \dots, \alpha_{\xi}\}$ with $\alpha_{i} >0$, there exists a matrix $P(\alpha_{i}) \in \mathbb{R}^{M \times M}$, $P=P^{\top}>0$ such that 
\begin{equation}\label{Lyap_eq_dir}
    PL_{e_{s}} + L_{e_{s}}^{\top}P = Q - \sum_{i=1}^{\xi} \alpha_{i} \left(Pv_{r_{i}}v_{l_{i}}^{\top} + v_{l_{i}}v_{r_{i}}^{\top}P\right),
\end{equation}
where $v_{r_{i}}, v_{l_{i}} \in \mathbb{R}^M$ are, respectively, the right and left eigenvectors of $L_{e_{s}}$ associated with the $i$th 0 eigenvalue. 
\end{enumerate}
$\xi$ is given in Table \ref{tab:eigenvalues} and satisfies $\xi = M-N+1$ if the signed digraph is SB, and also if it is SUB with a root node or SB-rooted SCC; otherwise, $\xi = M-N$.
%Moreover, if the signed digraph is SB, then $\xi = M-N+1$. If the signed digraph is SUB, then $\xi = M-N+1$ if and only if it contains either a root node or a SB-rooted SCC; otherwise, $\xi = M-N$.
\end{thm}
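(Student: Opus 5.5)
We prove the equivalence by reducing the Lyapunov equation \eqref{Lyap_eq_dir} to a standard Lyapunov equation for a Hurwitz matrix. Define the rank-$\xi$ correction
\[
\Pi := \sum_{i=1}^{\xi}\alpha_i v_{r_i}v_{l_i}^\top, \qquad A_\alpha := L_{e_s} + \Pi ,
\]
so that \eqref{Lyap_eq_dir} reads $PA_\alpha + A_\alpha^\top P = Q$. The proof then amounts to showing that $-A_\alpha$ is Hurwitz exactly when $\mathcal{G}_s$ contains a directed spanning tree.

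\textbf{(i)$\Rightarrow$(ii).} The steps are as follows.
\begin{enumerate}
\item Use the Jordan form $L_{e_s}=UJU^{-1}$ of Section~\ref{section3ev}, with the right and left eigenvectors of the zero eigenvalue normalized as in Remark~\ref{rmk:identity} and chosen biorthogonal, i.e.\ $v_{l_i}^\top v_{r_j}=\delta_{ij}$ because $U^{-1}U=I$.
\item Observe that $U^{-1}\Pi U=\mathrm{blkdiag}(\mathrm{diag}(\alpha_i),0)$. Hence
\[
U^{-1}A_\alpha U=\mathrm{blkdiag}\bigl(J_0+\mathrm{diag}(\alpha_i),\ \bar J\bigr).
\]
\item Note that $J_0+\mathrm{diag}(\alpha_i)$ is upper triangular with diagonal entries $\alpha_i>0$.
\item Under (i), Lemmata~\ref{lemma3a}--\ref{lemma3b} and Table~\ref{tab:eigenvalues} show that the eigenvalues of $\bar J$ are exactly the nonzero eigenvalues of $L_s$, and these have positive real part. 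So every eigenvalue of $A_\alpha$ has positive real part, i.e.\ $-A_\alpha$ is Hurwitz.
\item By the standard Lyapunov theorem, for every $Q=Q^\top>0$ the matrix
\[
P=\int_0^\infty e^{-A_\alpha^\top t}Qe^{-A_\alpha t}\,dt
\]
is the unique solution of $PA_\alpha+A_\alpha^\top P=Q$, and $P=P^\top>0$. It depends on the $\alpha_i$ through $A_\alpha$.
\item Check that $P$ is real. This needs the $v_{r_i},v_{l_i}$ to be real. They can be taken real because the zero eigenvalue is real: choose real bases of the generalized null spaces of $L_{e_s}$ and $L_{e_s}^\top$ compatible with the chains in \eqref{edge_ev}.
\end{enumerate}

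\textbf{Main obstacle.} The delicate point is that step 2 needs $\Pi$ to act only on the zero-eigenvalue generalized eigenspace, in the coordinates of the chains \eqref{edge_ev}. In particular, $\Pi$ must not destroy the triangular structure of the $2\times 2$ blocks of $J_0$. This holds because $\Pi$ is expressed through the rows of $U^{-1}$ and the columns of $U$. It remains to verify that the ordering $v_{l_1}^\top L_{e_s}=v_{l_2}^\top$ in \eqref{edge_ev} is consistent with $U^{-1}$. If it is not, we would relabel the vectors so that the pairing is exactly that of $U$ and $U^{-1}$.

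\textbf{(ii)$\Rightarrow$(i).} We argue by contraposition. Suppose $\mathcal{G}_s$ has no directed spanning tree, so it has several leader groups or disconnected parts.
\begin{enumerate}
\item By Lemma~\ref{lemma_multipleleaders} and Table~\ref{tab:eigenvalues}, the algebraic multiplicity of the zero eigenvalue is then larger than the $\xi$ prescribed in the statement, which is $M-N+1$ or $M-N$.
\item A correction $\Pi$ built from only $\xi$ eigenvector pairs therefore leaves at least one zero eigenvalue of $A_\alpha$, with some eigenvector $w\neq0$ satisfying $A_\alpha w=0$.
\item Then $w^{*}Qw=w^{*}(PA_\alpha+A_\alpha^\top P)w=0$, which contradicts $Q>0$.
\end{enumerate}
If the reverse implication is also meant to cover the case of more $\alpha_i$, one instead counts zero eigenvalues against the value of $\xi$ fixed in the statement.

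Finally, the stated values of $\xi$ are read off Table~\ref{tab:eigenvalues} for graphs with a spanning tree, that is, for $m\le 1$.
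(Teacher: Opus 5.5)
Your direction (i)$\Rightarrow$(ii) is correct. The paper itself imports this theorem from its companion work without proof, so I judge your argument on its own. Rewriting \eqref{Lyap_eq_dir} as $PA_\alpha+A_\alpha^\top P=Q$ and showing $\sigma(A_\alpha)=\{\alpha_1,\dots,\alpha_\xi\}\cup(\sigma(L_s)\setminus\{0\})$ is the right argument. The ``main obstacle'' you flag is not one. With $v_{r_i}$ the columns of $U$ and $v_{l_i}^\top$ the rows of $U^{-1}$, the matrix $U^{-1}\Pi U$ is diagonal, so $J_0+\mathrm{diag}(\alpha_i)$ is upper triangular whatever the Jordan structure. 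Moreover, \eqref{edge_ev} is literally $L_{e_s}U=UJ$ and $U^{-1}L_{e_s}=JU^{-1}$ read column- and row-wise. Note, however, that (i) enters only to fix the value of $\xi$. Since $\ell_{s_{ii}}=\sum_k|a_{ik}|$ equals the Gershgorin radius of row $i$, every nonzero eigenvalue of $L_s$, and hence of $L_{e_s}$, has positive real part for \emph{every} signed digraph.

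This is what breaks the converse: your Step~1 is false in general. For a digraph without a spanning tree, the zero eigenvalue of $L_{e_s}$ has algebraic multiplicity $M-N+l_1+l_{2_{\mathrm{SB}}}$. This follows from Lemma~\ref{lemma_multipleleaders}(i) together with the shared nonzero spectrum of $L_s$ and $L_{e_s}$. It exceeds the prescribed value $M-N+1$ or $M-N$ only when $l_1+l_{2_{\mathrm{SB}}}\ge 2$.

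Here is a counterexample. Take two directed 3-cycles $\nu_1\to\nu_2\to\nu_3\to\nu_1$ and $\nu_4\to\nu_5\to\nu_6\to\nu_4$, each with one antagonistic edge, and a follower $\nu_7$ with cooperative edges $(\nu_1,\nu_7)$ and $(\nu_4,\nu_7)$. Then $N=7$ and $M=8$, and the graph is SUB with two SUB-rooted SCCs and no spanning tree. $L_s$ is block triangular, with cycle blocks of spectrum $\{2,\tfrac12\pm\tfrac{\sqrt3}{2}i\}$ and a diagonal entry $2$, hence nonsingular. So $L_{e_s}$ has a simple zero eigenvalue and $\xi=M-N=1$, which is exactly the ``otherwise'' value in the statement. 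Your own forward computation then puts $\sigma(A_\alpha)$ in the open right half-plane, so (ii) holds while (i) fails. The same happens with one root node and one SUB-rooted SCC, where both counts equal $M-N+1$.

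Hence Steps~2--3 never get started, and the reverse implication cannot be proved as stated. It holds only under an extra restriction, for example for SB digraphs. There every rooted SCC is SB, so lacking a spanning tree forces $l_1+l_{2_{\mathrm{SB}}}=m\ge 2$, and your zero-eigenvalue counting argument goes through. If instead $\xi$ is read literally as the number of zero eigenvalues of $L_{e_s}$, then (ii) holds for every signed digraph and the converse fails outright. You must fix one reading of $\xi$ and restrict the claimed equivalence accordingly.
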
 

\begin{corol}\label{cor:lyap_eq_spanningtree} (\cite[Theorem 2]{TACPelin})
Let $\mathcal{G}_s$ be a signed directed spanning tree. Then the associated edge Laplacian $L_{e_s}$ has no zero eigenvalues, i.e., $\xi = 0$, and for any
$Q \in \mathbb{R}^{(N-1) \times (N-1)}$, $Q = Q^\top > 0$, there exists a matrix
$P \in \mathbb{R}^{(N-1) \times (N-1)}$, $P = P^\top > 0$, such that
\begin{equation}\label{Lyap-eq}
    PL_{e_s} + L_{e_s}^\top P = Q.
\end{equation}
\end{corol}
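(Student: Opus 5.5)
The corollary concerns a signed directed spanning tree, so $M=N-1$. The plan has two parts: first show that $L_{e_s}$ is nonsingular, so $\xi=0$; then obtain $P$ from the standard Lyapunov theorem for Hurwitz matrices.

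\emph{Step 1: the rank of $E_s$.} A directed spanning tree has no cycles, so it is SB. By Lemma~\ref{lemma_edgegauge}, $E=DE_sD_e$ is the incidence matrix of an unsigned tree on $N$ nodes with $N-1$ edges. Its columns are linearly independent, so $\mbox{rank}(E_s)=N-1$, i.e., $E_s^\top\in\mathbb{R}^{(N-1)\times N}$ has full row rank. This is consistent with Item (ii) of Lemma~\ref{lemma3a}.

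\emph{Step 2: $L_{e_s}$ is nonsingular.} A tree has exactly one root node. By Lemma~\ref{lemmaE_in}, $\mbox{rank}(E_{s\odot})=N-1$, with the root's row of $E_{s\odot}$ identically zero. Each non-root node has exactly one incoming edge, so the remaining $N-1$ rows each contain a single entry $\pm1$, in distinct columns. After deleting the zero row and permuting, $E_{s\odot}$ is therefore a signed permutation matrix $\Pi$, and
\[
L_{e_s}=E_s^\top E_{s\odot}=\tilde E^\top\Pi ,
\]
where $\tilde E$ is $E_s$ with the root row removed.

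It remains to show that $\tilde E$ is invertible. Under the gauge transformation, deleting the root row of a tree incidence matrix gives the reduced incidence matrix. This matrix is nonsingular, with determinant $\pm1$, by the matrix-tree argument in \cite{thulasiraman2011graphs}. Hence $\mbox{rank}(L_{e_s})=N-1=M$.

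An alternative route to the same conclusion uses the spectrum. From the proof of Item (iii) of Lemma~\ref{lemma3a}, the nonzero eigenvalues of $L_{e_s}$ coincide with those of $L_s$. By Item (i) of that lemma, $L_s$ has exactly $N-1$ eigenvalues with positive real part. Since $L_{e_s}$ is $(N-1)\times(N-1)$ and nonsingular, all its eigenvalues must be these. I expect this step, showing $\xi=0$, to be the main point needing care. The counting argument only works once nonsingularity is known, so the explicit factorization above is what I would rely on.

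\emph{Step 3: the Lyapunov equation.} From Step 2, all eigenvalues of $L_{e_s}$ have positive real parts, so $-L_{e_s}$ is Hurwitz. The classical Lyapunov theorem then gives, for any $Q=Q^\top>0$, the unique solution
\[
P=\int_0^\infty e^{-L_{e_s}^\top t}\,Q\,e^{-L_{e_s}t}\,dt ,
\]
which is symmetric and positive definite and satisfies \eqref{Lyap-eq}. Equivalently, this is Theorem~\ref{thm:lyap_eq_Le_directed} with $\xi=0$, in which case the sum in \eqref{Lyap_eq_dir} is empty.
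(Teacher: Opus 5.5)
Your proof is correct. In this paper the corollary is not proved but imported from \cite[Theorem 2]{TACPelin}. The route the paper's own lemmas offer is the one you treat as an aside. A signed directed spanning tree is SB with $M=N-1$, so Item (iii) of Lemma~\ref{lemma3a} (equivalently, the first SB row of Table~\ref{tab:eigenvalues}) gives $\mbox{rank}(L_{e_s})=N-1=M$, hence $\xi=0$. Since $L_{e_s}=E_s^\top E_{s\odot}$ and $L_s=E_{s\odot}E_s^\top$ share their nonzero spectrum, Item (i) then places every eigenvalue of $L_{e_s}$ in the open right half-plane. The Lyapunov equation is then the classical one, i.e., Theorem~\ref{thm:lyap_eq_Le_directed} with an empty sum, exactly as in your Step 3.

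What you do differently is Step 2. The factorization $L_{e_s}=\tilde E^\top\Pi$, into a signed permutation matrix and a gauge-transformed reduced tree incidence matrix, is valid. It is self-contained, does not rely on the paper's rank lemmas, and even gives $\det L_{e_s}=\pm1$. However, it exploits the one-incoming-edge-per-node structure of a tree and does not carry over to the non-tree graphs in the rest of the paper.

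Two comments on how you weigh the routes. First, the factorization yields only nonsingularity. The positivity of the real parts, which Step 3 actually needs, comes solely from your spectral paragraph, so that paragraph is a necessary complement rather than an alternative. Second, your worry that the counting argument needs nonsingularity in advance disappears once algebraic multiplicities are tracked. Sylvester's identity gives $\det(\lambda I_N-L_s)=\lambda\det(\lambda I_{N-1}-L_{e_s})$, so simplicity of the zero eigenvalue of $L_s$ forces $\det L_{e_s}\neq0$ directly. The spectral route alone therefore settles both $\xi=0$ and the sign of the real parts, which makes the factorization redundant.

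A still more elementary check is also available. Here $[L_{e_s}]_{kk}=1$, and for $k\neq l$ one has $[L_{e_s}]_{kl}\neq0$ only if edge $l$ enters the initial node of edge $k$. Ordering the edges by the depth of their terminal nodes therefore makes $L_{e_s}$ unit lower triangular, so all its eigenvalues equal $1$.
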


\begin{thm} \label{proposition:lyap_eq} (\cite[Theorem 4]{TACPelin})
    Let $\mathcal{G}_s$ be a signed digraph of $N$ agents interconnected by $M$ edges, containing $m$ leader groups. Let $l_1$ be the number of root nodes, $l_{2_{SB}}$ be the number of SB-rooted SCCs, and $l_{2_{SUB}}$ be the number of SUB-rooted SCCs, where $m=l_1 + l_2$ and $l_2 = l_{2_{SB}} +l_{2_{SUB}}$. Let $L_{e_{s}}$ be the associated directed edge Laplacian. Then the following are equivalent:
\renewcommand{\theenumi}{(\roman{enumi}}
\begin{enumerate}
	\item the graph has $m$ leader groups, as defined in Assumption \ref{ass2}, and given each follower $\nu_j, \forall j \in \mathcal{F}$, there exists at least one leader $\nu_i, \forall i \in \mathcal{L}$, such that there exists at least one path from $\nu_i$ to $\nu_j$,
	\item for any $Q \in \mathbb{R}^{M \times M}, Q=Q^{\top}> 0$ and for any $\{ \alpha_1, \alpha_2, \dots, \alpha_{\xi}\}$ with $\alpha_i >0$, there exists a matrix $P(\alpha_i) \in \mathbb{R}^{M \times M}$, $P=P^{\top}>0$ such that \eqref{Lyap_eq_dir} holds,
where $v_{ri}, v_{li} \in \mathbb{R}^M$ are the right and left eigenvectors of $L_{e_{s}}$ associated with the $i$th 0 eigenvalue.
\end{enumerate}
Moreover, $\xi$ satisfies $\xi = M-N+l_1+l_{2_{SB}}$ whether the signed digraph is SB or SUB.
\end{thm}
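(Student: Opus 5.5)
The statement to prove is Theorem~\ref{proposition:lyap_eq}. It is the multiple-leader analogue of Theorem~\ref{thm:lyap_eq_Le_directed}, so I will reuse that argument. The one thing to replace is the spectral input: Lemma~\ref{lemma_multipleleaders} and Table~\ref{tab:eigenvalues} now play the role of Lemmata~\ref{lemma3a}--\ref{lemma3b}.

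\textbf{(i) $\Rightarrow$ (ii).} First, from Item (i) of Lemma~\ref{lemma_multipleleaders}, $L_s$ has $l_1+l_{2_{SB}}$ zero eigenvalues and all other eigenvalues have positive real parts. By the argument in the proof of Item (iii) of Lemma~\ref{lemma3a}, $L_{e_s}=E_s^\top E_{s\odot}$ and $L_s=E_{s\odot}E_s^\top$ share their nonzero eigenvalues. Hence $L_{e_s}$ has exactly $N-l_1-l_{2_{SB}}$ eigenvalues with positive real part, and its zero eigenvalue has algebraic multiplicity $\xi=M-N+l_1+l_{2_{SB}}$. This gives the "moreover" claim, and it does not depend on whether the graph is SB or SUB.

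Next, I take the Jordan decomposition $L_{e_s}=UJU^{-1}$ with $J_0$ as in \eqref{jordan}, normalized as in Remark~\ref{rmk:identity}. I define
\[
\Pi:=\sum_{i=1}^{\xi}v_{r_i}v_{l_i}^\top,
\]
which is the spectral projector onto the generalized null space. I then form the shifted matrix
\[
A:=L_{e_s}+\sum_{i=1}^{\xi}\alpha_i v_{r_i}v_{l_i}^\top .
\]
In Jordan coordinates, $A$ acts on the zero-eigenvalue part as $J_0+\mathrm{diag}(\alpha_1,\dots,\alpha_\xi)$. This block is upper triangular with positive diagonal entries, because the $2\times2$ nilpotent blocks keep their superdiagonal $1$. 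On the complementary part $A$ acts as $\bar J$. Therefore every eigenvalue of $A$ has positive real part.

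Finally, I rewrite \eqref{Lyap_eq_dir} as $PA+A^\top P=Q$. Since $-A$ is Hurwitz, for every $Q=Q^\top>0$ this Lyapunov equation has the unique solution
\[
P=\int_0^\infty e^{-A^\top t}Qe^{-At}\,dt ,
\]
which is symmetric and positive definite. This $P$ depends on the $\alpha_i$, as the statement indicates.

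\textbf{(ii) $\Rightarrow$ (i).} Suppose the graph does not have the structure in (i), so that some follower is reachable from no leader. Then the set of nodes not reachable from any leader contains a closed subgraph with no incoming edges, and that subgraph is itself an extra rooted SCC or root. In either case the number of zero eigenvalues of $L_s$, and hence of $L_{e_s}$, changes. A cleaner route, which I prefer, is to show that condition (ii) itself forces the structure. From (ii), $A$ is similar to a matrix with positive-real-part spectrum: indeed $PA+A^\top P=Q>0$ with $P>0$ implies $-A$ is Hurwitz. The rank-$\xi$ perturbation moves exactly $\xi$ eigenvalues, namely the zero ones, so $L_{e_s}$ must have exactly $\xi$ zero eigenvalues and the rest in the open right half-plane. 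I would then argue by contrapositive using Table~\ref{tab:eigenvalues}. Without the reachability condition, either $\xi$ would not match $M-N+l_1+l_{2_{SB}}$ for the declared leader counts, or the graph would not have $m>1$ leader groups, in which case it has a spanning tree and falls under Theorem~\ref{thm:lyap_eq_Le_directed}.

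\textbf{Main obstacle.} The delicate step is this converse. "Leader group" is defined structurally (rooted SCCs and roots), so every digraph decomposes into its source SCCs, and every follower is automatically reachable from one of them. Condition (i) is therefore close to a definitional reformulation. I would handle it by making that observation explicit through the condensation DAG of the graph, so that the only real content of (ii) $\Rightarrow$ (i) is that $m>1$ and the multiplicity count match. Everything else reduces to the Jordan-block shift in the forward direction, which is routine once the semisimple versus $2\times2$ Jordan structure from the preceding remark is in place.
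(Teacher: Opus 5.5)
The paper gives no proof of this theorem; it imports it from \cite[Theorem 4]{TACPelin}. Your proposal therefore has to stand on its own.

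\textbf{Forward direction.} Your (i)$\Rightarrow$(ii) is correct. Since $v_{l_i}^\top v_{r_j}=\delta_{ij}$ for all $j\le M$, you get $U^{-1}AU=\mathrm{blkdiag}(J_0+\mathrm{diag}(\alpha_1,\dots,\alpha_\xi),\bar J)$. This matrix is positive stable whatever the Jordan structure of $J_0$, and the integral formula then gives the required $P$. One repair is needed in the count of $\xi$. The eigenvector-mapping argument in the proof of Item (iii) of Lemma~\ref{lemma3a} only shows that nonzero eigenvalues of $L_s$ are eigenvalues of $L_{e_s}$; it does not show that their algebraic multiplicities agree. To get exactly $N-l_1-l_{2_{SB}}$ nonzero eigenvalues of $L_{e_s}$, and hence $\xi=M-N+l_1+l_{2_{SB}}$, use the identity $\lambda^N\det(\lambda I_M-E_s^\top E_{s\odot})=\lambda^M\det(\lambda I_N-E_{s\odot}E_s^\top)$.

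\textbf{Converse: the gap.} Your (ii)$\Rightarrow$(i) does not work as written.
\begin{itemize}
\item Your first route claims that a source component (root or rooted SCC) not counted among the leaders changes the number of zero eigenvalues of $L_s$. This is false when that component is a SUB-rooted SCC: by Item (i) of Lemma~\ref{lemma_multipleleaders}, it contributes no zero eigenvalue.
\item Your preferred spectral route fails for a structural reason. By Gershgorin, row $i$ of $L_s$ has centre $\ell_{s_{ii}}$ and radius $\ell_{s_{ii}}$. Hence every nonzero eigenvalue of every signed Laplacian, and so of $L_{e_s}$, has positive real part.
\item Your own forward computation then shows that (ii) holds for \emph{every} signed digraph, once $\xi$ is the algebraic multiplicity of $0$.
\item Likewise, $\xi=M-N+l_1+l_{2_{SB}}$ is an identity once $l_1,l_{2_{SB}}$ are read off the source components. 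So ``$\xi$ would not match'' never occurs, and (ii) carries no information about the graph.
\end{itemize}

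\textbf{What the converse actually rests on.} The entire content of (ii)$\Rightarrow$(i) is the observation in your last paragraph, and it does not use (ii) at all. Following incoming edges backwards from any follower in the condensation DAG ends at a source component, which by definition is a leader group. Two points must then be stated as hypotheses, not called ``the real content''.
\begin{itemize}
\item $m>1$ and $k<N$ cannot be derived from (ii), since a digraph with a spanning tree satisfies (ii) by Theorem~\ref{thm:lyap_eq_Le_directed}. They must be carried by the preamble ``containing $m$ leader groups''.
\item You need the ``at least one leader'' reading of reachability. Under the literal ``every leader reaches every follower'' wording of Item 2 of Assumption~\ref{ass2}, the converse is false. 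Take the cooperative edges $(\nu_1,\nu_3)$ and $(\nu_2,\nu_4)$: they give $L_{e_s}=I_2$ and $\xi=0=M-N+l_1$, so (ii) holds with $P=Q/2$, yet $\nu_1$ does not reach $\nu_4$.
\end{itemize}
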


\section{Exponential Stability}\label{section5}
In this section, we present our main result. We consider the synchronization problem of multi-agent systems \eqref{FO}-\eqref{CL} over signed digraphs, and establish the exponential stability of the synchronization errors. To this end, we first express the synchronization errors using the edge-Laplacian notation introduced in Section \ref{section3}, and then reformulate the control problem accordingly. 

Using the definition of the incidence matrix in \eqref{434}, we may express the edge states in \eqref{def_e} in vector form
\begin{align}\label{err_vect}
	e = E_{s}^\top x,
\end{align}
where $E_{s}$ is the incidence matrix corresponding to the graph. For signed digraphs, from the definition of the directed Laplacian in \eqref{498}, we write the control law in \eqref{CL} in vector form as
\begin{align}\label{CL_vect_edge}
	u = -k_1  E_{{s \odot}} {e}.
\end{align}
Differentiating the edge states \eqref{err_vect} yields
\begin{align}\label{edgestates}
	\dot{e}= -k_1 E_{s}^\top E_{{s \odot}} e =  -k_1 L_{e_{s}} e.
\end{align}
Similarly, differentiating the synchronization errors \eqref{def_e_bar},
\begin{equation}
	\dot{\bar{e}} = [I -\sum_{i=1}^{\xi} v_{r_{i}}v_{l_{i}}^{\top} ] \dot{e} = -k_1 [I -\sum_{i=1}^{\xi} v_{r_{i}}v_{l_{i}}^{\top} ] L_{e_{s}}{e},
\end{equation}
we obtain
\begin{align}\label{dyn_edge1}
  \dot{\bar e} = -k_1 L_{e_{s}} \bar e.
\end{align}
\begin{remark}\label{explanation}
Observe that, to obtain \eqref{dyn_edge1}, we distinguish two cases based on Table \ref{tab:eigenvalues} and the relations in \eqref{edge_ev}: the first case where $\xi = \gamma$, and the second where $\xi>\gamma$. Here, $\gamma$ and $\xi$ denote the geometric and algebraic multiplicities of the zero eigenvalue.
\begin{itemize}
    \item Case 1 ( $\xi = \gamma$ ): In this case, all left eigenvectors satisfy $v_{l_{i}}^{\top}L_{e_s}=0.$ Consequently, we have $\dot{\bar{e}} = -k_1L_{e_{s}}\bar{e}.$ Additionally, since $L_{e_s}v_{r_{i}}=0$, it also holds that $\dot{\bar{e}} = -k_1L_{e_{s}}[I -\sum_{i=1}^{\xi} v_{r_{i}}v_{l_{i}}^{\top} ]{e}.$ Finally, substituting \eqref{def_e_bar} in the latter, we obtain \eqref{dyn_edge1}.
    \item Case 2 ( $\xi>\gamma$ ): For $i \in \{1, 2, \dots, \xi-\gamma \}$, we have the relations $v_{l_{{2i-1}}}^{\top}L_{e_s}=v_{l_{{2i}}}^{\top},$ $v_{l_{{2i}}}^{\top}E_{s}^\top=0,$ and $L_{e_s}v_{r_{{2i}}}=v_{r_{{2i-1}}}$, $E_{s}v_{r_{{2i-1}}}=0$. The remaining $2\gamma - \xi$ right and left eigenvectors associated with zero eigenvalues satisfy $L_{e_s}v_{r_{i}}=0$ and $v_{l_{i}}^{\top}L_{e_s}=0.$ Thus, for this case as well, $\dot{\bar{e}} = -k_1L_{e_{s}}[I -\sum_{i=1}^{\xi} v_{r_{i}}v_{l_{i}}^{\top} ]{e}$ holds, and substituting \eqref{def_e_bar} in the latter, Eq. \eqref{dyn_edge1} follows.
\end{itemize}
\end{remark}
 % Then, the closed-loop system for the synchronization error dynamics is given as

We are now ready to present our main result, global exponential stability of the origin of the set $\{ \bar e = 0\}$.  %Let the control objective for \eqref{dyn_edge1} be to ensure that the origin is globally exponentially stable, that is,
% \begin{align}\label{obj_e}
% 	\lim_{t \to \infty} \bar e(t) = 0.
% \end{align}

\begin{thm}\label{prop:result2}
Consider the system \eqref{dyn_edge1}. Under Assumptions~\ref{standing_ass} and \ref{ass1} or Assumptions \ref{standing_ass} and \ref{ass2}, for any $Q=Q^\top>0$ there exists $P=P^\top$ such that 
    $$V(\bar e) := \frac{1}{2} \bar e^\top P \bar e >0 \text{ and } \dot V(\bar e) = - \frac{1}{2} k_1 \bar e^\top Q \bar e<0, \ \forall\bar e \neq 0.$$
    Then, the set $\{ \bar e = 0\}$ is globally exponentially stable. Furthermore, for a signed digraph, under Assumption \ref{ass1}:
    \renewcommand{\theenumi}{(\roman{enumi}}
	\begin{enumerate}
    \item If $\mathcal{G}_{s}$ is SB, then agents achieve bipartite consensus, that is, the inequality \eqref{obj_BC} holds. 
    \item If $\mathcal{G}_{s}$ is SUB and contains a SUB-rooted SCC without a root node, then agents achieve trivial consensus, that is, the inequality \eqref{obj_C} holds.
    \item If $\mathcal{G}_{s}$ is SUB and contains either a SB-rooted SCC or a root node, then agents achieve interval bipartite consensus, that is, the inequality \eqref{obj_IntBC} holds.
    \end{enumerate}
    On the other hand, for a signed digraph containing $m$ leader groups, under Assumption \ref{ass2}:
    \begin{enumerate}
     \item[(iv)] If $\mathcal{G}_{s}$ contains at least one root node or a SB-rooted SCC, then agents achieve bipartite containment; that is, if $\mathcal{G}_{s}$ is SB, then inequality \eqref{eq:containment_SB} holds, whereas if $\mathcal{G}_{s}$ is SUB, inequality \eqref{eq:containment_SUB} holds.
     \item[(v)] If $\mathcal{G}_{s}$ contains only SUB-rooted SCCs, then agents achieve trivial consensus, that is, the inequality \eqref{obj_C} holds.
    \end{enumerate}
\end{thm}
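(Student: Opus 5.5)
The plan is to split the proof into a stability part and a characterization part.

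\emph{Stability.} Under Assumptions~\ref{standing_ass} and \ref{ass1} I would invoke Theorem~\ref{thm:lyap_eq_Le_directed}; under Assumptions~\ref{standing_ass} and \ref{ass2} I would invoke Theorem~\ref{proposition:lyap_eq}. In both cases, for any $Q=Q^\top>0$ and any $\alpha_i>0$, this gives $P=P^\top>0$ satisfying \eqref{Lyap_eq_dir}. If $\xi=0$, for instance for a spanning tree, I would use Corollary~\ref{cor:lyap_eq_spanningtree} instead. Differentiating $V$ along \eqref{dyn_edge1} gives
$$\dot V=-\tfrac{k_1}{2}\bar e^\top(PL_{e_s}+L_{e_s}^\top P)\bar e=-\tfrac{k_1}{2}\bar e^\top Q\bar e+k_1\sum_{i}\alpha_i(\bar e^\top P v_{r_i})(v_{l_i}^\top\bar e).$$
The key observation is that $v_{l_i}^\top\bar e=0$ for every $i\le\xi$. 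This holds because $\bar e=[I-\sum_j v_{r_j}v_{l_j}^\top]e$ and the rows of $U^{-1}$ and columns of $U$ are biorthogonal, $v_{l_i}^\top v_{r_j}=\delta_{ij}$; this is Remark~\ref{rmk:identity} together with the Jordan basis, so the projector is exact. Hence $\dot V=-\tfrac{k_1}{2}\bar e^\top Q\bar e$. With $\lambda_{\min}(P)|\bar e|^2\le 2V\le\lambda_{\max}(P)|\bar e|^2$, this yields $\dot V\le -k_1\frac{\lambda_{\min}(Q)}{\lambda_{\max}(P)}V$, so $\{\bar e=0\}$ is globally exponentially stable. Since $\bar e=0$ is invariant under the (linear, time-invariant) flow and $e_m$ is determined by $e(0)$, this is exactly convergence of $e$ to $e_m$.

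\emph{Limit values of the edge states.} By \eqref{edgestates}, $e(t)=e^{-k_1L_{e_s}t}e(0)$. Using the Jordan form \eqref{jordan}, the modes with positive real part decay. On the zero block, a $2\times2$ Jordan block would produce a term linear in $t$. I would rule this out by showing the corresponding component vanishes: by Remark~\ref{explanation}, $v_{l_{2i}}^\top E_s^\top=0$, so $v_{l_{2i}}^\top e(0)=v_{l_{2i}}^\top E_s^\top x(0)=0$ since $e(0)\in\mathrm{range}(E_s^\top)$. Hence $e(t)\to e_m=\sum_i v_{r_i}v_{l_i}^\top e(0)$ explicitly; I expect to simplify this using $v_{l_{2i}}^\top e=0$ and $E_s v_{r_{2i-1}}=0$.

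\emph{Case-by-case interpretation.} I would then read off each item from $e_\infty\in\mathcal{N}(L_{e_s})\cap\mathrm{range}(E_s^\top)$ and the rank facts.

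In (i), the graph is SB with a spanning tree. By Lemma~\ref{NE_SUB}, $\mathcal N(L_{e_s}^\top)=\mathcal N(E_s)$, so $e_m=0$. Thus $e_k\to0$, which is \eqref{obj_BC}.

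In (ii), the graph is SUB with a SUB-rooted SCC. By Lemma~\ref{lemma3b}(i), $L_s$ is Hurwitz-like, with all eigenvalues having positive real part. Then $x\to0$ directly from $\dot x=-k_1L_sx$, giving \eqref{obj_C}.

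In (iii), there is a simple zero eigenvalue of $L_s$ (Lemma~\ref{lemma3b}(ii)). Then $x\to v_r v_l^\top x(0)$, and the components of this limit are bounded by some $\theta$, giving \eqref{obj_IntBC}.

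In (iv)--(v), I would argue node-wise. The leader groups evolve autonomously: roots stay constant, SB-SCCs reach bipartite consensus, and SUB-SCCs go to zero, by Lemma~\ref{lemma_multipleleaders}(i). The follower block satisfies $\dot x_F=-k_1(L_{FF}x_F+L_{FL}x_L)$ with $L_{FF}$ nonsingular, and after gauging with Lemma~\ref{lemma_gauge} in the SB case it is an M-matrix. Then $x_F\to -L_{FF}^{-1}L_{FL}x_L^\infty$. A nonnegative row-stochastic argument gives \eqref{eq:containment_SB}, while in the SUB case the row sums of $|L_{FF}^{-1}L_{FL}|$ are at most one, giving \eqref{eq:containment_SUB}. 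If only SUB-SCCs are present, $x_L^\infty=0$, so $x_F\to0$, which is (v). Finally, I would translate these node limits into the edge limits $e_m$ via \eqref{err_vect}.

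The main obstacle is the SUB row-sum bound $\|L_{FF}^{-1}L_{FL}\|_\infty\le1$, which requires comparing $L_{FF}$ with its unsigned (absolute) counterpart. A secondary difficulty is the consistency of the Jordan-block argument ensuring $e_m$ is constant.
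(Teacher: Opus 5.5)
Your proposal is correct; the stability part matches the paper, and the case analysis takes a different route. The stability argument is the paper's. You get \eqref{Lyap_eq_dir} from Theorem~\ref{thm:lyap_eq_Le_directed} or Theorem~\ref{proposition:lyap_eq} (Corollary~\ref{cor:lyap_eq_spanningtree} if $\xi=0$), and the cross terms vanish because $v_{l_i}^\top\bar e=0$. You make explicit the biorthogonality $v_{l_i}^\top v_{r_j}=\delta_{ij}$ that the paper uses tacitly, and you add the decay rate. For the edge limit, the paper shows $\dot e_m=0$ via Remark~\ref{explanation} and combines it with $\bar e\to 0$. You instead solve $e(t)=e^{-k_1L_{e_s}t}e(0)$ in Jordan coordinates and remove the secular term of each $2\times 2$ block using $v_{l_{2i}}^\top E_s^\top=0$ and $e(0)\in\mathrm{range}(E_s^\top)$; this is the same fact, used more transparently.

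The real divergence is in the case analysis. The paper stays in edge coordinates: $\mathcal N(L_{e_s}^\top)=\mathcal N(E_s)$ gives $E_s^\top x\to 0$. Then $\mathrm{rank}(E_s)=N$, or the edge gauge together with $\mathcal N(E^\top)=\mathrm{span}(\mathbf 1_N)$, yields trivial or bipartite consensus. For (iii)--(iv) the paper only exhibits the edge limit and asserts the objective, citing \cite{sekercioglu2023exponential} for containment. You go back to $\dot x=-k_1L_sx$ and use the spectra from Lemmas~\ref{lemma3b} and \ref{lemma_multipleleaders}. You then exploit the block-triangular leader--follower structure to get $x_F^\infty=-L_{FF}^{-1}L_{FL}x_L^\infty$. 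This actually derives the containment inequalities rather than citing them, at the price of using the edge formulation only for the stability claim.

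The two points you leave open close quickly. First, nonsingularity of $L_{FF}$, with spectrum in the open right half-plane, follows from Lemma~\ref{lemma_multipleleaders}(i) and the block-triangular form. The leader block already carries all $l_1+l_{2_{SB}}$ zero eigenvalues. Second, the SUB bound needs no norm estimate on $L_{FF}^{-1}$. The follower steady-state equations give $\big(\sum_k|a_{jk}|\big)|x_j|\le\sum_{k\ne j}|a_{jk}||x_k|$. That is, $\bar L_{FF}|x_F|+\bar L_{FL}|x_L|\le 0$ entrywise, where $\bar L$ is the Laplacian with weights $|a_{ij}|$. Since every follower is reachable from a leader, $\bar L_{FF}$ is a nonsingular M-matrix. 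Multiplying by $\bar L_{FF}^{-1}\ge 0$ gives $|x_F|\le W|x_L|$, where $W=-\bar L_{FF}^{-1}\bar L_{FL}$ is nonnegative and row-stochastic.

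Finally, in (i), passing from $e_k\to 0$ on the edges to bipartite consensus between all pairs still needs a connectivity step. For example, $x\to(v_l^\top x(0))\,D\mathbf 1_N$ follows from Lemma~\ref{lemma3a}(i) and Lemma~\ref{lemma_gauge}, with $v_l$ the normalized left null vector of $L_s$.
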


\begin{proof}
Let $Q = Q^\top > 0$ and $\alpha>0$ be arbitrarily fixed. By Assumption \ref{ass1} and Theorem \ref{thm:lyap_eq_Le_directed}, there exists a symmetric positive definite matrix $P = P^\top > 0$ such that \eqref{Lyap_eq_dir} holds. In the case where the considered graph is a spanning tree, by Corollary \ref{cor:lyap_eq_spanningtree}, there exists a $P = P^\top > 0$ such that \eqref{Lyap-eq} holds. In the case the digraph contains multiple leader groups, by Assumption \ref{ass2} and Theorem \ref{proposition:lyap_eq}, there exists a $P = P^\top > 0$ such that \eqref{Lyap_eq_dir} holds. Then, consider the Lyapunov function candidate $V(\bar e) := \frac{1}{2} \bar e^\top P \bar e.$ Its total time derivative along the trajectories of \eqref{dyn_edge1} yields $\dot V(\bar e) = - k_1 \bar e^\top P L_{e_{s}} \bar e.$  If the underlying signed digraph is a directed spanning tree, using \eqref{Lyap-eq}, we obtain $\dot V(\bar e) = - \frac{1}{2} k_1 \bar e^\top Q \bar e.$ If the underlying signed digraph contains a directed spanning tree or multiple leader groups, using \eqref{Lyap_eq_dir}, we obtain $	\dot V(\bar e) = - \frac{1}{2} k_1 \bar e^\top Q \bar e + \frac{1}{2} \bar e^\top \left( \sum_{i=1}^{\xi} \alpha_{i} (Pv_{r_{i}}v_{l_{i}}^{\top} + v_{l_{i}}v_{r_{i}}^{\top}P) \right) \bar e.$ From the definition of the synchronization errors \eqref{def_e_bar} and using the identity $v_{l_{i}}^{\top}v_{r_{i}}=1,\ i \leq \xi$ (Remark \ref{rmk:identity}), we have $\sum_{i=1}^{\xi} \alpha_{i} Pv_{r_{i}}v_{l_{i}}^{\top} [I - \sum_{i=1}^{\xi} v_{r_{i}}v_{l_{i}}^{\top}]e
        = \sum_{i=1}^{\xi} \alpha_{i} Pv_{r_{i}}v_{l_{i}}^{\top}e - \sum_{i=1}^{\xi} \alpha_{i} Pv_{r_{i}}v_{l_{i}}^{\top} e = 0$
    and $\bar e^\top [I - \sum_{i=1}^{\xi} v_{r_{i}}v_{l_{i}}^{\top}]^\top \sum_{i=1}^{\xi}\alpha_{i} v_{l_{i}}v_{r_{i}}^{\top}P
        = \sum_{i=1}^{\xi} e^\top  \alpha_{i} v_{l_{i}}v_{r_{i}}^{\top}P - \sum_{i=1}^{\xi} e^\top  \alpha_{i} v_{l_{i}}v_{r_{i}}^{\top}P = 0.$ Thus, $\dot V(\bar e) = - \frac{1}{2} k_1 \bar e^\top Q \bar e < - \frac{1}{2} k_1 \lambda_{\min}(Q) | \bar e |^2 < 0,$ in both cases, where $\lambda_{\min}(Q) >0$ is the smallest eigenvalue of $Q$. The first statement of the proposition follows.

Then, differentiating the weighted average system \eqref{e_m}, we obtain the dynamical equation $\dot{e}_m = \sum_{i=1}^{\xi}v_{r_i}v_{l_i}^\top \dot{e} = - k_1 \sum_{i=1}^{\xi}v_{r_i}v_{l_i}^\top L_{e_s} e =0$ (see Remark \ref{explanation}). Its solution gives $e_m(0) = e_m(t)$. Moreover, from the previous result, we have $\lim_{t \to \infty} \bar e (t)=0$, which gives, from \eqref{def_e_bar}, $\lim_{t \to \infty} e (t)= e_{m}(t)=  e_{m}(0)$. Then, from \eqref{e_m} and \eqref{err_vect}, we obtain
$		\lim_{\bar t \to \infty} e(t) = \left[ v_{r_{1}}v_{l_{1}}^{\top} + \cdots + v_{r_{{\xi}}}v_{l_{{\xi}}}^{\top}\right] e(0)
		= \left[ v_{r_{1}}v_{l_{1}}^{\top} + \cdots + v_{r_{{\xi}}}v_{l_{{\xi}}}^{\top}\right] E_{s}^\top {x}(0),$
	where $v_{r_{i}}$ and $v_{l_{i}}$ are the right and left eigenvectors associated with the $\xi$ zero eigenvalues of $L_{e_{s}}$.

\noindent For a signed digraph containing a spanning tree:
    
   \textbf{(i)} In the case the signed digraph is SB, $\xi = M - N + 1$, and since, from Lemma \ref{NE_SUB}, $v_{l_{1}}^{\top} E_{s}^\top = v_{l_{2}}^{\top} E_s^\top = \cdots = v_{l_{{\xi}}}^{\top} E_{s}^\top = 0$, it leads to $\lim_{t \to \infty} E_{s}^\top {x}(t)= 0.$ Then, using the edge-gauge transformation from Lemma \ref{lemma_edgegauge}, we have that $\lim_{t\to \infty} E_{s}^\top {x}(t)= \lim_{t \to \infty} D_{e} E^\top D {x}(t) = 0.$ From \cite[Theorem 3.4]{zelazo2007agreement}, the null space of $E^\top$ is spanned by $\mathbf{1}_{N}$, so $\lim_{t \to \infty} D {x}(t) = c \mathbf{1}_{N},$ where $c \in \mathbb{R}$. Therefore, we can deduce that $\lim_{t \to \infty} {x}(t) = \alpha D \mathbf{1}_{N},$ where $\alpha \in \mathbb{R}$ is a constant determined by the initial conditions of the agents and the signed interactions. This implies that the system achieves bipartite consensus.

	\textbf{(ii)} In the case that the signed graph is SUB and contains a SUB-rooted SCC without a root node, from Item (i) of Lemma \ref{NE_SUB}, we have that $v_{l_{1}}^{\top} E_{s}^\top = v_{l_{2}}^{\top} E_{s}^\top = \cdots = v_{l_{{\xi}}}^{\top} E_{s}^\top = 0$, it leads to $\lim_{t \to \infty} E_{s}^\top {x}(t)= 0.$ Since from Item (iii) of Lemma \ref{lemma3b}, $\mbox{rank}(E_{s}) = N$, the only solution is 
	$\lim_{t\to \infty}{x}(t) = 0,$ which implies that the system achieves trivial consensus.

    \textbf{(iii)} In the case that the signed graph is SUB and contains a SB-rooted SCC, from Item (i) of Lemma \ref{NE_SUB}, we have that $\mathcal{N}(L_{e_s}^\top) = \mathcal{N}(E_s)$. On the other hand, from Table \ref{tab:eigenvalues}, the algebraic multiplicity of the zero eigenvalue is $M - N + 1$, while its geometric multiplicity is $M - N$. Then, from Eq. \eqref{edge_ev}, we have $L_{e_s}v_{r_2}=v_{r_1}$ and $v_{l_1}^\top L_{e_s}=v_{l_2}^\top$. Thus, $v_{l_{{1}}}^{\top} E_{s}^\top \neq 0$, and $\lim_{t \to \infty} e(t) = v_{r_{1}}v_{l_{1}}^{\top} E_{s}^\top {x}(0).$ In the case that the signed graph is SUB and contains a root node, from Item (ii) of Lemma \ref{NE_SUB}, we have that $\mathcal{N}(L_{e_s}^\top) \neq \mathcal{N}(E_s)$ so $v_{l_{{k}}}^{\top} E_{s}^\top \neq 0$ for all $k \leq \xi$. Moreover, we also have from Item (iii) of Lemma \ref{lemma3b} that $\mbox{rank}(E_{s}) = N$ so $\lim_{t \to \infty} e(t) \neq 0$. Then,
    $\lim_{t \to \infty} e(t) = \sum_{i=1}^{\xi} v_{r_{i}}v_{l_{i}}^{\top} E_{s}^\top {x}(0).$ 
    Thus, the only achievable objective is interval bipartite consensus.

\noindent For a signed digraph containing multiple leaders, $L_{e_s}$ contains $M - N + l_1 + l_{2_{SB}}$ zero eigenvalues. 

\textbf{(iv)} First, suppose that the signed digraph contains at least one root node. From Item (vi) of Lemma~\ref{lemma_multipleleaders}, we have $\mathcal{N}(L_{e_s}^\top) \neq \mathcal{N}(E_s)$. Let $\gamma$ denote the geometric multiplicity of the zero eigenvalue. Then, from Table \ref{tab:eigenvalues} and the relations in \eqref{edge_ev}, we obtain
$    \lim_{t \to \infty} e(t) = \left(\sum_{k = 0}^{\gamma - \xi - 1} v_{r_{{2k + 1}}} v_{l_{{2k + 1}}}^{\top} + \sum_{i = 2(\gamma - \xi)}^{\xi} v_{r_{i}} v_{l_{i}}^{\top} \right) E_{s}^\top {x}(0).$
Now, suppose that all leader groups are rooted SCCs. In this case, $L_{e_s}$ has $M - N + l_{2_{SB}}$ zero eigenvalues. On the one hand, Lemma~\ref{lemma_multipleleaders} implies $\mathcal{N}(L_{e_s}^\top) = \mathcal{N}(E_s)$. On the other hand, from Table \ref{tab:eigenvalues}: (a) If the graph is SB, then the algebraic multiplicity of the zero eigenvalue is $M - N + l_{2_{SB}}$, while its geometric multiplicity is $M - N + 1$. (b) If the graph is SUB, but contains at least one SB-rooted SCC, then the algebraic and geometric multiplicities are $M - N + l_{2_{SB}}$ and $M - N$, respectively. Consequently, from \eqref{jordan} and \eqref{edge_ev}, 
$ \lim_{t \to \infty} e(t) =  \sum_{k = 0}^{\gamma - \xi -1} v_{r_{{2k + 1}}} v_{l_{{2k + 1}}}^{\top}  E_{s}^\top {x}(0).$ Thus, following the results in \cite{sekercioglu2023exponential}, the only achievable objective for the agents is bipartite containment.

\textbf{(v)} Finally, consider the case in which the signed digraph contains only SUB-rooted SCCs. Then, both the algebraic and geometric multiplicities of the zero eigenvalue are equal to $M - N$, and $v_{l_{1}}^{\top} E_{s}^\top = v_{l_{2}}^{\top} E_{s}^\top = \cdots = v_{l_{_{\xi}}}^{\top} E_{s}^\top = 0$. Hence, $\lim_{t \to \infty} E_{s}^\top {x}(t)= 0.$ Since, from Item (iii) of Lemma~\ref{lemma_multipleleaders}, $\mbox{rank}(E_{s}) = N$, the only possible solution is 
$\lim_{t\to \infty}{x}(t) = 0,$
which implies that the system achieves trivial consensus.
\end{proof}

\section{Numerical Example}\label{section6}
\begin{figure}[t!]\centering
\begin{subfigure}[b]{0.1\textwidth}
			\centering
\begin{tikzpicture}[node distance={8mm}, thick,main/.style = {draw, circle}] 
	\tikzset{mynode/.style={circle,draw,minimum size=12pt,inner sep=0pt,thick},}
		\node[mynode] (1) {$\nu_1$}; 
		\node[mynode] (2) [right  of=1] {$\nu_2$};
		\node[mynode] (3) [below of=1] {$\nu_3$};
		\node[mynode] (4) [below of=2] {$\nu_4$};
		\node[mynode, fill=blue, fill opacity=0.2, draw=blue, text opacity=1] (5) [above of=1] {$\nu_5$};
		\draw[->, color=red, dash pattern=on 1mm off 1mm] (1) -- node[midway, above] {$e_1$}(2);
		\draw[->] (1) -- node[midway, left] {$e_2$}(3);
		\draw[->] (2) -- node[midway, right] {$e_3$}(4);
		\draw[->] (3) -- node[midway, below] {$e_4$}(4);
		\draw[<-] (1) -- node[midway,left] {$e_5$}(5);
	\end{tikzpicture}
\caption{$\mathcal{G}_1$}
\end{subfigure}
\begin{subfigure}[b]{0.12\textwidth}
	\centering
\begin{tikzpicture}[node distance={8mm}, thick,main/.style = {draw, circle}] 
	\tikzset{mynode/.style={circle,draw,minimum size=12pt,inner sep=0pt,thick},}
    \definecolor{pastelgreen}{rgb}{0.4660 0.6740 0.1880}
	\node[mynode] (1) {$\nu_1$}; 
	\node[mynode] (2) [right  of=1] {$\nu_2$};
	\node[mynode] (3) [below of=1] {$\nu_3$};
	\node[mynode] (4) [below of=2] {$\nu_4$};
	\node[mynode] (5) [left of=1] {$\nu_5$};
	\node[mynode] (6) [left of=3] {$\nu_6$};
    \node[mynode, fill=blue, fill opacity=0.2, draw=blue, text opacity=1] (7) [above of=1, xshift=-0.5cm] {$\nu_7$};
	\node[mynode] (8) [above of=1, xshift=0.5cm] {$\nu_8$};
	\node[mynode, fill=blue, fill opacity=0.2, draw=blue, text opacity=1] (9) [below of=3] {$\nu_9$};
    \draw[red, dash pattern=on 1mm off 1mm][->](1) -- node[midway, below] {$e_1$}(2);
	\draw[->] (1) -- node[midway, right] {$e_2$}(3);
	\draw[->] (2) -- node[midway, right] {$e_3$}(4);
	\draw[red, dash pattern=on 1mm off 1mm][->] (3) -- node[midway, below] {$e_4$}(4);
	\draw[->] (1) -- node[midway, below] {$e_5$}(5);
	\draw[->, color=red, dash pattern=on 1mm off 1mm] (3) -- node[midway, above] {$e_6$}(6);
    \draw[->] (7) -- node[midway, above] {$e_8$}(8);
    \draw[->, color=red, dash pattern=on 1mm off 1mm] (8) -- node[midway, right] {$e_9$}(1);
    \draw[<-, color=red, dash pattern=on 1mm off 1mm] (1) -- node[midway, left] {$e_7$}(7);
    \draw[->] (9) -- node[midway, left] {$e_{10}$}(3);
	\end{tikzpicture}
	\caption{$\mathcal{G}_2$}
\end{subfigure}
\begin{subfigure}[b]{0.12\textwidth}
	\centering
	\begin{tikzpicture}[node distance={8mm}, thick,main/.style = {draw, circle}] 
     \definecolor{pastelgreen}{rgb}{0.4660 0.6740 0.1880}
\tikzset{mynode/.style={circle,draw,minimum size=12pt,inner sep=0pt,thick},}
	\node[mynode, fill=pastelgreen, fill opacity=0.2, draw=pastelgreen, text opacity=1] (1) {$\nu_1$}; 
	\node[mynode] (2) [right  of=1] {$\nu_2$};
	\node[mynode] (3) [below of=1] {$\nu_3$};
	\node[mynode] (4) [below of=2] {$\nu_4$};
	\node[mynode] (5) [left of=1] {$\nu_5$};
	\node[mynode] (6) [left of=3] {$\nu_6$};
    \node[mynode, fill=pastelgreen, fill opacity=0.2, draw=pastelgreen, text opacity=1] (7) [above of=1, xshift=-0.5cm] {$\nu_7$};
	\node[mynode, fill=pastelgreen, fill opacity=0.2, draw=pastelgreen, text opacity=1] (8) [above of=1, xshift=0.5cm] {$\nu_8$}; %
	\node[mynode, fill=blue, fill opacity=0.2, draw=blue, text opacity=1] (9) [below of=3] {$\nu_9$};
    \draw[red, dash pattern=on 1mm off 1mm][->](1) -- node[midway, below] {$e_1$}(2);
	\draw[->] (1) -- node[midway, right] {$e_2$}(3);
	\draw[->] (2) -- node[midway, right] {$e_3$}(4);
	\draw[->, color=red, dash pattern=on 1mm off 1mm] (3) -- node[midway, below] {$e_4$}(4);
	\draw[->] (1) -- node[midway, below] {$e_5$}(5);
	\draw[->, color=red, dash pattern=on 1mm off 1mm] (3) -- node[midway, above] {$e_6$}(6);
    \draw[->] (7) -- node[midway, above] {$e_8$}(8);
    \draw[->] (8) -- node[midway, right] {$e_9$}(1);
    \draw[->] (1) -- node[midway, left] {$e_7$}(7);
    \draw[->] (9) -- node[midway, left] {$e_{10}$}(3);
	\end{tikzpicture}
	\caption{$\mathcal{G}_3$}
\end{subfigure}
\begin{subfigure}[b]{0.12\textwidth}
	\centering
	\begin{tikzpicture}[node distance={8mm}, thick,main/.style = {draw, circle}] 
\tikzset{mynode/.style={circle,draw,minimum size=12pt,inner sep=0pt,thick},}
	\node[mynode, fill=red, fill opacity=0.2, draw=red, text opacity=1] (1) {$\nu_1$}; 
	\node[mynode] (2) [right  of=1] {$\nu_2$};
	\node[mynode] (3) [below of=1] {$\nu_3$};
	\node[mynode] (4) [below of=2] {$\nu_4$};
	\node[mynode] (5) [left of=1] {$\nu_5$};
	\node[mynode] (6) [left of=3] {$\nu_6$};
    \node[mynode, fill=red, fill opacity=0.2, draw=red, text opacity=1] (7) [above of=1, xshift=-0.5cm] {$\nu_7$};
	\node[mynode, fill=red, fill opacity=0.2, draw=red, text opacity=1] (8) [above of=1, xshift=0.5cm] {$\nu_8$}; %
	\node[mynode, fill=blue, fill opacity=0.2, draw=blue, text opacity=1] (9) [below of=3] {$\nu_9$};
    \draw[red, dash pattern=on 1mm off 1mm][->](1) -- node[midway, below] {$e_1$}(2);
	\draw[->] (1) -- node[midway, right] {$e_2$}(3);
	\draw[->] (2) -- node[midway, right] {$e_3$}(4);
	\draw[->, color=red, dash pattern=on 1mm off 1mm] (3) -- node[midway, below] {$e_4$}(4);
	\draw[->] (1) -- node[midway, below] {$e_5$}(5);
	\draw[->, color=red, dash pattern=on 1mm off 1mm] (3) -- node[midway, above] {$e_6$}(6);
    \draw[->] (7) -- node[midway, above] {$e_8$}(8);
    \draw[->] (8) -- node[midway, right] {$e_9$}(1);
    \draw[->, color=red, dash pattern=on 1mm off 1mm] (1) -- node[midway, left] {$e_7$}(7);
    \draw[->] (9) -- node[midway, left] {$e_{10}$}(3);
	\end{tikzpicture}
	\caption{$\mathcal{G}_4$}
\end{subfigure}
  \caption{The black edges represent cooperative interactions, and the dashed red edges represent antagonistic interactions. (a) $\mathcal{G}_1$ is SUB and has $\nu_5$ as a root node. (b) $\mathcal{G}_2$ is SB ($\mathcal{V}_1 = \{ \nu_1, \nu_3, \nu_5, \nu_9 \}$ and $\mathcal{V}_2 = \{ \nu_2, \nu_4, \nu_6, \nu_7, \nu_8 \}$) and has $\nu_7$ and $\nu_9$ as root nodes. (c) $\mathcal{G}_3$ is SB ($\mathcal{V}_1 = \{ \nu_1, \nu_3, \nu_5, \nu_7, \nu_8, \nu_9 \}$ and $\mathcal{V}_2 = \{ \nu_2, \nu_4, \nu_6 \}$) and contains $\nu_9$ as a root node and a SB-rooted SCC formed by $\nu_1$, $\nu_8$, and $\nu_9$. (d) $\mathcal{G}_3$ is SUB and contains $\nu_9$ as a root node and a SUB-rooted SCC formed by $\nu_1$, $\nu_8$, and $\nu_9$.}
  \label{graph}
\end{figure}
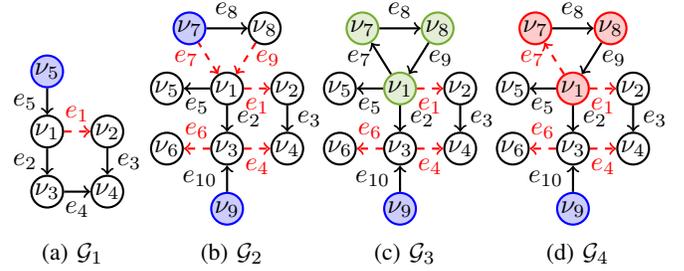

\begin{figure*}[h!]
\centering
\subfloat[]{
\includegraphics[width=0.95\columnwidth]{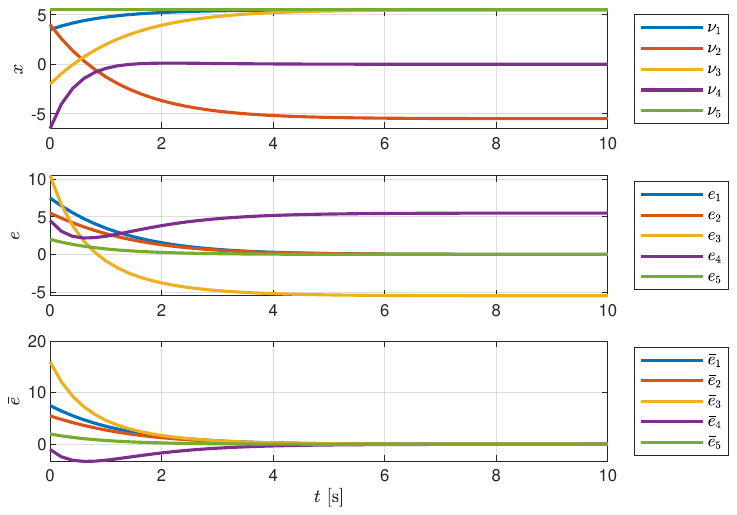}\label{fig1}}\quad
\subfloat[]{
\includegraphics[width=0.95\columnwidth]{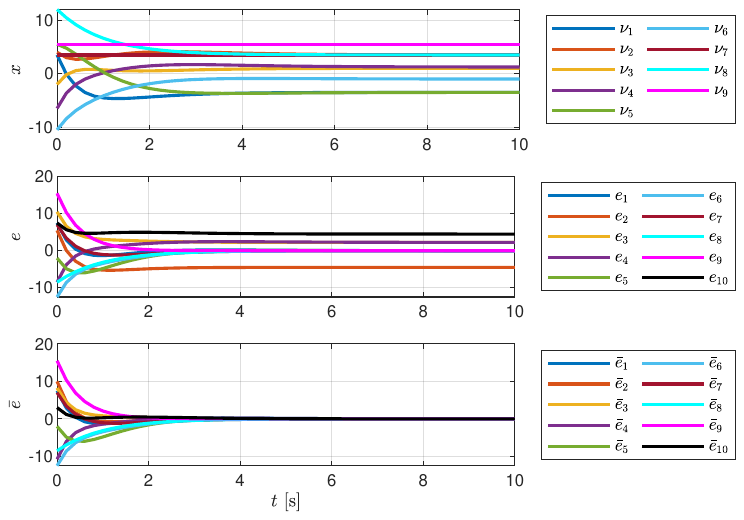}
\label{fig2}}\\[5pt]
\subfloat[]{\includegraphics[width=0.95\columnwidth]{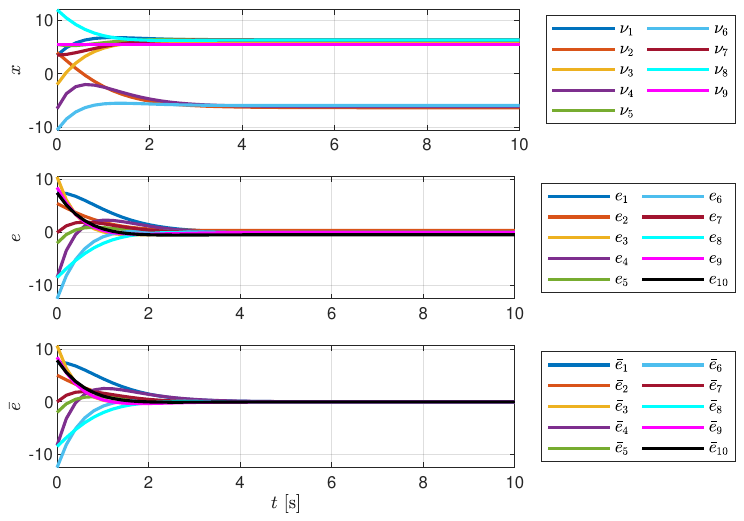}\label{fig3}}
\quad
\subfloat[]{\includegraphics[width=0.95\columnwidth]{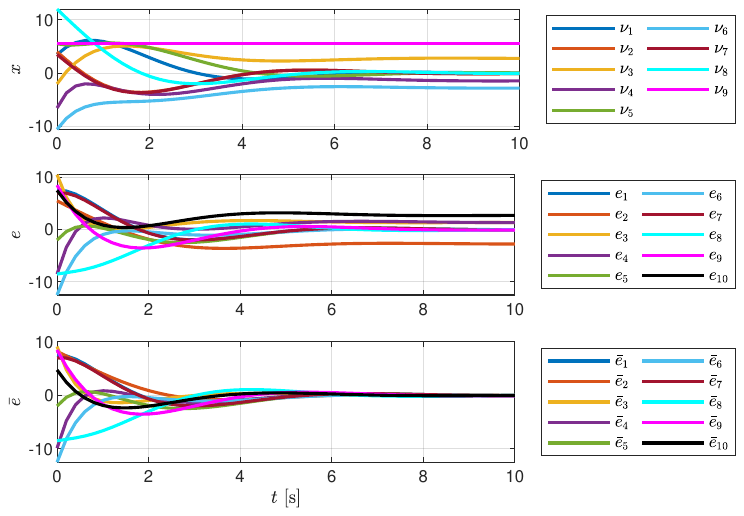}\label{fig4}}
\caption{ {\small Evolution of the trajectories of agents $x$ \eqref{FO}-\eqref{CL}, edges $e$ \eqref{edgestates} and synchronization errors $\bar e$ \eqref{dyn_edge1}. (a): $\mathcal{G}_1$.  (b): $\mathcal{G}_2$. (c): $\mathcal{G}_3$. (d): $\mathcal{G}_4$.}} 
\end{figure*}

We illustrate our theoretical findings by considering multi-agent systems evolving according to \eqref{FO}-\eqref{CL}, represented by four different signed digraphs depicted in Figure \ref{graph}. Let $k_1=4$ and $P$ generated by \eqref{Lyap-eq} or \eqref{Lyap_eq_dir}, depending on the considered graph structure, with $Q=I_{M\times M}$, $\alpha_k=1$, $k\leq 9$. The initial conditions of the agents for $\mathcal{G}_1$ are $[3.5,\ 4,\ -2,\ -6.5,\ 5.5]$ and for $\mathcal{G}_2,\ \mathcal{G}_3,$ and $\mathcal{G}_4$ are $[3.5,\ 4,\ -2,\ -6.5,\ 5.5,\ -10.5,\ 3.5,\ 12,\ 5.5]$. The trajectory evolution of the states of the agents \eqref{FO}-\eqref{CL}, edges \eqref{edgestates}, and synchronization errors \eqref{dyn_edge1} is shown in Figure \ref{fig1}--\ref{fig4} for each graph in Figure \ref{graph}, respectively. It is clear that in Figure \ref{fig1}, agents achieve interval bipartite consensus since $\mathcal{G}_1$ is SUB and contains a root node, and in Figures  \ref{fig2}--\ref{fig4}, agents achieve bipartite containment since $\mathcal{G}_2$, $\mathcal{G}_3$, and $\mathcal{G}_4$ contain multiple leader groups. The edge states do not converge to zero and all synchronization errors converge to zero.

\section{Conclusion}\label{section7}
In this paper, we addressed the edge-based synchronization control problem of multi-agent systems interconnected over signed digraphs which contain one or multiple leader groups. First, we presented new spectral properties of signed edge Laplacians containing multiple zero eigenvalues and eigenvectors associated with each zero eigenvalue. Then, we established exponential stability of the synchronization set using the Lyapunov equation-based characterization of the signed edge Laplacians with multiple zero eigenvalues. Finally, we investigated different synchronization scenarios of the signed multi-agent system in edge states, from standard consensus to more complex scenarios of bipartite containment and interval bipartite consensus. Further research aims to extend these results to the case of constrained systems and discrete-time settings.

\section*{References}
\bibliographystyle{IEEEtran}
\bibliography{references_pelin}%,references_angela

\end{document}